\theoremstyle{definition}
\newtheorem{thm}{Theorem}[section]
\newtheorem{lem}[thm]{Lemma}
\newtheorem{eg}[thm]{Example}
\newtheorem{rmk}[thm]{Remark}
\def\footnoterule{\kern 1mm \hrule width 7cm \kern 2.2mm}%
\def\1{{\bf{1}}}
\def\Tr{\mathrm{Tr}}
\date{}
\begin{document}
\begin{center}
{\LARGE \textbf{Tighter monogamy and polygamy relations of quantum entanglement in multi-qubit systems}}
\end{center}






\begin{center}
\normalsize \rm{} Wen-Wen Liu$^{1^{\dag}}$, Zi-Feng Yang$^{1}$, Shao-Ming Fei$^{1,2}$
\end{center}

\begin{center}
$^{1}$School of Mathematical Sciences, Capital Normal University, Beijing 100048, China\\
$^{2}$Max-Planck-Institute for Mathematics in the Sciences, 04103 Leipzig, Germany\\
$^{\dag}$Corresponding author, E-mail: 2180501016@cnu.edu.cn
\end{center}

\begin{abstract}
 We investigate the monogamy relations related to the concurrence, the entanglement of formation, convex-roof extended negativity, Tsallis-q entanglement and R$\acute{e}$nyi-$\alpha$ entanglement, the polygamy relations related to the entanglement of formation, Tsallis-q entanglement and R$\acute{e}$nyi-$\alpha$ entanglement. Monogamy and polygamy inequalities are obtained for arbitrary multipartite qubit systems, which are proved to be tighter
than the existing ones. Detailed examples are presented.
\end{abstract}

\noindent
{\bf PACS numbers: }{03.67.-a, 03.67.Mn, 03.65.Ud}

\noindent
{\bf Key words: }{Monogamy relations, Polygamy relations, Concurrence, Entanglement of formation, Negativity, Tsallis-q entanglement, R$\acute{e}$nyi-$\alpha$ entanglement}
\section{\bf Introduction}
Quantum entanglement is an essential feature of quantum mechanics, which distinguishes the quantum theory from the classical theory \cite{3,4,5,6,7}. The quantification of quantum entanglement is a central issue in quantum information theory \cite{1,2}. As one of the fundamental differences between quantum entanglement and classical correlation, a key property of entanglement is that a quantum system entangled with one of the other systems limits its entanglement with the remaining ones. The monogamy of entanglement (MoE) gives rise to the structures of entanglement in the multipartite setting. Monogamy is also an essential feature allowing for security in quantum key distribution \cite{8}.

For a tripartite quantum state $\rho_{ABC}$, MoE is characterized as $\varepsilon(\rho_{A|BC})\geq\varepsilon(\rho_{AB})+\varepsilon(\rho_{AC})$, where $\rho_{AB}=\Tr_C(\rho_{ABC})$ and $\rho_{AC}=\Tr_B(\rho_{ABC})$ are reduced density matrices, and $\varepsilon$ is an entanglement measure. The well-known concurrence introduced in \cite{9,10} has an explicit expression for arbitrary two-qubit states. Based on this expression, Coffman, Kundu and Wootters \cite{11} derived the famous genuine three-qubit entanglement monotone, three tangle, and conjuctured an inequality for concurrence which describes the monogamy feature of entanglement distribution in a multipartite quantum system. However, such monogamy relations are not always satisfied by any entanglement measures. It has been shown that the squared concurrence $C^2$, and the squared entanglement of formation $E^2$ do satisfies the monogamy relations, while the squared convex-roof extended negativity (CREN) $\widetilde{N}^2$ satisfies the monogamy relations for multiqubit states \cite{12,13,14,15,16}.

Another important concept is the assisted entanglement, which is the amount dual to the bipartite entanglement measure. It has a dually monogamous property in multipartite quantum systems and gives rise to polygamy relations. For a tripartite state $\rho_{ABC}$, the usual polygamy relation is of the form, $\varepsilon_a(\rho_{A|BC})\leq\varepsilon_a(\rho_{AB})+\varepsilon_a(\rho_{AC})$, where $\varepsilon_a$ is the corresponding measure of assisted entanglement associated to $\varepsilon$. Such polygamy inequality has been deeply investigated in recent years, and was generalized to multiqubit systems and classes of higher dimensional quantum systems \cite{17,18,19,20,21,22,23,45}.

Some monogamy and polygamy inequalities related to the $\alpha$th power of entanglement measures have been also proposed. In \cite{24,25,26,27}, it is proved that the $\alpha$th power of concurrence and CREN satisfy the monogamy inequalities in multiqubit systems for $\alpha\geq2$. It has also been shown that the $\alpha$th power of EoF satisfies monogamy relations when $\alpha\geq\sqrt{2}$. Besides, the $\alpha$th power of Tsallis-q entanglement and R$\acute{e}$nyi-$\alpha$ entanglement satisfy monogamy relations when $\alpha\geq1$ for some cases \cite{17,24,25,26,27,28}. The corresponding polygamy relations have also been established \cite{19,20,21,23,29,30}.

In this paper, we present the monogamy inequalities in terms of the concurrence C, entanglement of formation E, convex-roof extended negativity $\widetilde{N}$, Tsallis-q entanglement $T_q$, and R$\acute{e}$nyi-$\alpha$ entanglement $E_{\acute{\alpha}}$, the polygamy inequalities in terms of the entanglement of formation $E_{a}$, Tsallis-q entanglement $T_{aq}$, and R$\acute{e}$nyi-$\alpha$ entanglement $E_{a\acute{\alpha}}$. These inequalities are proved to be tighter than the existing ones.

\section{\bf Tighter monogamy relations for concurrence}
Let $\mathcal{H}_X$ denote a finite-dimensional complex vector space associated to a quantum subsystem $X$. Given a bipartite pure state $|\phi\rangle_{AB}$ in Hilbert space $\mathcal{H}_A\otimes\mathcal{H}_B$, the concurrence is given by
\begin{equation}
C(|\phi\rangle_{AB})=\sqrt{2[1-\Tr(\rho_A^2)]},
\end{equation}
where $\rho_A = \Tr(|\phi\rangle_{AB}\langle\phi|)$ is the reduced density matrix obtained by tracing over the subsystem $B$ \cite{31}-\cite{33}. The concurrence for a bipartite mixed state $\rho_{AB}$ is defined by the convex roof extension,
\begin{equation}
C(\rho_{AB})=\min_{\{p_{i},|\phi_{i}\rangle\}}\sum_{i}p_{i}C(|\phi_{i}\rangle),
\end{equation}
where the minimum is taken over all possible decompositions of $\rho_{AB}=\sum\limits_{i}p_i|\phi_i\rangle\langle\phi_i|$,
with $p_i\geq0$, $\sum\limits_{i}p_i=1$, and $|\phi_i\rangle\in\mathcal{H}_A\otimes\mathcal{H}_B$.
For any $N$-qubit mixed state $\rho_{AB_1 \cdots B_{N-1}}$ in an $N$-qubit system
$\mathcal{H}_A\otimes\mathcal{H}_{B_1}\otimes \cdots \otimes\mathcal{H}_{B_{N-1}}$, the concurrence
$C(\rho_{A|B_1 \cdots B_{N-1}})$ of the state $\rho_{AB_1 \cdots B_{N-1}}$ viewed as a bipartite
state under the partition $A$ and $B_1,B_2,\cdots,B_{N-1}$, satisfies
\begin{equation}\label{b}
C^{\alpha}(\rho_{A|B_1 \cdots B_{N-1}})\geq C^{\alpha}(\rho_{AB_1})+C^{\alpha}(\rho_{AB_2})+\cdots+C^{\alpha}(\rho_{AB_{N-1}}),
\end{equation}
for $\alpha\geq2$, where $\rho_{AB_i}=\Tr_{B_1\cdots B{i-1}B{i+1}\cdots B_{N-1}}(\rho_{AB_1 \cdots B_{N-1}})$ \cite{24}.
The relation (\ref{b}) is improved for $\alpha\geq2$ \cite{25}. If $C(\rho_{AB_i})\geq C(\rho_{A|B_{i+1}\cdots B_{N-1}})$ for $i=1,2,\cdots,m$, and $C(\rho_{AB_j})\leq C(\rho_{A|B_{j+1}\cdots B_{N-1}})$ for $j=m+1,\cdots,N-2$, $1\leq m\leq N-3$, $N\geq4$, then
\begin{eqnarray}\label{c}
C^{\alpha}(\rho_{A|B_1 \cdots B_{N-1}})&\geq& C^{\alpha}(\rho_{AB_1})+(2^{\frac{\alpha}{2}}-1)C^{\alpha}(\rho_{AB_2})+\cdots+(2^{\frac{\alpha}{2}}-1)^{m-1}C^{\alpha}(\rho_{AB_m})\nonumber\\
& &+(2^{\frac{\alpha}{2}}-1)^{m+1}(C^{\alpha}(\rho_{AB_{m+1}})+\cdots+C^{\alpha}(\rho_{AB_{N-2}}))\nonumber\\
& &+(2^{\frac{\alpha}{2}}-1)^mC^{\alpha}(\rho_{AB_{N-1}}).
\end{eqnarray}
The relation (\ref{c}) is further improved for $\alpha\geq2$ as
\begin{eqnarray}\label{d}
C^{\alpha}(\rho_{A|B_1 \cdots B_{N-1}})&\geq&C^{\alpha}(\rho_{AB_1})+(((1+k)^\frac{\alpha}{2}-1)/k^\frac{\alpha}{2})C^{\alpha}(\rho_{AB_2})+\cdots\nonumber\\
& &+(((1+k)^\frac{\alpha}{2}-1)/k^\frac{\alpha}{2})^{m-1}C^{\alpha}(\rho_{AB_m})\nonumber\\
& &+(((1+k)^\frac{\alpha}{2}-1)/k^\frac{\alpha}{2})^{m+1}(C^{\alpha}(\rho_{AB_{m+1}})+\cdots+C^{\alpha}(\rho_{AB_{N-2}}))\nonumber\\
& &+(((1+k)^\frac{\alpha}{2}-1)/k^\frac{\alpha}{2})^mC^{\alpha}(\rho_{AB_{N-1}}),
\end{eqnarray}
with $kC^2(\rho_{AB_i})\geq C^2(\rho_{A|B_{i+1}\cdots B_{N-1}})$ for $i=1,2,\cdots,m$, and $C^2(\rho_{AB_j})\leq kC^2(\rho_{A|B_{j+1}\cdots B_{N-1}})$ for $j=m+1,\cdots,N-2$, $1\leq m\leq N-3$, $N\geq4$, and $0<k\leq1$ \cite{27}.

In the following, we show that these monogamy relations for concurrence can become even tighter under some conditions. For convenience, we denote by $C_{AB_j}=C(\rho_{AB_j})$ for $j=1,2,\cdots,N-1$, and $C_{A|B_1B_2\cdots B_{N-1}}=C(\rho_{A|B_1 B_2\cdots B_{N-1}})$. We first introduce the following lemma.

\begin{lem}\label{dd}
For any non-negative real number $x$ and $y$ satisfying $0\leq y\leq x$, and real numbers $t$ and $s$ satisfying $t\geq1$, $0\leq s\leq1$, we have
\begin{equation}\label{e}
(1+x)^t-x^t\geq(1+y)^t-y^t,
\end{equation}
\begin{equation}\label{f}
(1+x)^s-x^s\leq(1+y)^s-y^s.
\end{equation}
\end{lem}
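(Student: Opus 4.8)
The plan is to reduce each of the two inequalities to a monotonicity statement for a single auxiliary function of one variable, and then to verify that monotonicity by differentiation. The underlying principle is that the first difference $h(1+z)-h(z)$ of a convex function $h$ is non-decreasing in $z$, while for a concave function it is non-increasing; the two claims are the instances $h(u)=u^t$ with $t\ge1$ (convex) and $h(u)=u^s$ with $0\le s\le1$ (concave).

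For (\ref{e}), fix $t\ge1$ and set $f(z)=(1+z)^t-z^t$ for $z\ge0$. Differentiating gives $f'(z)=t\big[(1+z)^{t-1}-z^{t-1}\big]$. Since $t-1\ge0$, the map $u\mapsto u^{t-1}$ is non-decreasing on $[0,\infty)$, and $1+z>z\ge0$, so $(1+z)^{t-1}\ge z^{t-1}$ and hence $f'(z)\ge0$. Therefore $f$ is non-decreasing on $[0,\infty)$, and $0\le y\le x$ yields $f(x)\ge f(y)$, which is exactly (\ref{e}). (When $t=1$ both sides equal $1$, consistently with $f'\equiv0$.)

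For (\ref{f}), fix $s$ with $0\le s\le1$ and set $g(z)=(1+z)^s-z^s$. For $z>0$ we have $g'(z)=s\big[(1+z)^{s-1}-z^{s-1}\big]$; now $s-1\le0$, so $u\mapsto u^{s-1}$ is non-increasing on $(0,\infty)$, whence $(1+z)^{s-1}\le z^{s-1}$ and $g'(z)\le0$. Thus $g$ is non-increasing on $(0,\infty)$; since $g$ is continuous at $z=0$, this extends to all of $[0,\infty)$, and $0\le y\le x$ gives $g(x)\le g(y)$, i.e. (\ref{f}).

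I do not expect a genuine obstacle. The only mild subtlety is the endpoint $z=0$ in (\ref{f}): when $s<1$ the expression $z^{s-1}$ blows up at $0$, so $g'$ is only available on the open interval $(0,\infty)$; this is harmless, because monotonicity on the open interval combined with continuity on the closed interval already delivers the desired comparison. One could equally avoid calculus entirely and argue directly from convexity of $u\mapsto u^t$ and concavity of $u\mapsto u^s$ by comparing chord slopes, which gives the same conclusion.
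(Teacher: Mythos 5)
Your proof is correct and follows essentially the same route as the paper: differentiate $(1+z)^t-z^t$ (resp. $(1+z)^s-z^s$), use monotonicity of $u\mapsto u^{t-1}$ (resp. $u^{s-1}$) to fix the sign of the derivative, and conclude by monotonicity; your extra care at the endpoint $z=0$ for $s<1$ is a minor refinement the paper leaves implicit.
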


\begin{proof}
Let $g(x,t)=(1+x)^t-x^t$. Since $\frac{\partial g(x,t)}{\partial x}=t[(1+x)^{t-1}-x^{t-1}]\geq0$, the function $g(x,t)$ is increasing with respect to $x$. As $y\leq x$, $g(y,t)\leq g(x,t)$, we get the inequality (\ref{e}). Similar to the proof of inequality (\ref{e}), we can obtain the inequality (\ref{f}).
\end{proof}

For any $2\otimes2\otimes2^{n-2}$ mixed state $\rho_{ABC}\in\mathcal{H}_A\otimes\mathcal{H}_B\otimes\mathcal{H}_C$, we have from
relation (\ref{b})
$$
C_{A|BC}^2\geq C_{AB}^2+C_{AC}^2.
$$
Therefore, there exists $\mu\ge 1$ such that
\begin{equation}\label{mu}
C_{A|BC}^2\geq C_{AB}^2+\mu C_{AC}^2.
\end{equation}

\begin{lem}\label{ee}
 Let $l\geq1$ be a real number. For any $2\otimes2\otimes2^{n-2}$ mixed state $\rho_{ABC}\in\mathcal{H}_A\otimes\mathcal{H}_B\otimes\mathcal{H}_C$, if $C_{AB}^2\geq lC_{AC}^2$, we have
\begin{eqnarray}
C_{A|BC}^\alpha\geq C_{AB}^\alpha+((\mu+l)^\frac{\alpha}{2}-l^\frac{\alpha}{2})C_{AC}^\alpha,
\end{eqnarray}
for all $\alpha\geq2$.
\end{lem}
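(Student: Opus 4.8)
The plan is to reduce Lemma~\ref{ee} to the elementary inequality (\ref{e}) of Lemma~\ref{dd}, exactly in the spirit of the standard arguments behind (\ref{b})--(\ref{d}). Starting from (\ref{mu}), write $C_{A|BC}^\alpha=(C_{A|BC}^2)^{\alpha/2}\geq(C_{AB}^2+\mu C_{AC}^2)^{\alpha/2}$. Since $\alpha\geq 2$, the map $u\mapsto u^{\alpha/2}$ is convex and super-additive in the relevant sense; the goal is to extract a clean additive lower bound. Factoring out $C_{AB}^2$ (the case $C_{AB}=0$ forcing $C_{AC}=0$ by the hypothesis $C_{AB}^2\geq lC_{AC}^2$ with $l\geq1$, so the inequality is trivial there), set $x=\mu C_{AC}^2/C_{AB}^2$. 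Then
\begin{equation*}
C_{A|BC}^\alpha\geq C_{AB}^\alpha\left(1+\frac{\mu C_{AC}^2}{C_{AB}^2}\right)^{\frac{\alpha}{2}}
= C_{AB}^\alpha\bigl(1+x\bigr)^{\frac{\alpha}{2}}.
\end{equation*}

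Next I would use the trivial identity $(1+x)^{\alpha/2}=1+\bigl[(1+x)^{\alpha/2}-x^{\alpha/2}\bigr]+x^{\alpha/2}-1$, but more directly: $(1+x)^{\alpha/2}\geq 1+\bigl[(1+x)^{\alpha/2}-x^{\alpha/2}\bigr]\cdot\frac{x^{\alpha/2}}{x^{\alpha/2}}$ is not quite the shape wanted. The cleaner route is to observe
\begin{equation*}
C_{AB}^\alpha(1+x)^{\frac{\alpha}{2}} - C_{AB}^\alpha
= C_{AB}^\alpha\bigl[(1+x)^{\frac{\alpha}{2}}-1\bigr]
\geq C_{AB}^\alpha\cdot\frac{(1+x)^{\frac{\alpha}{2}}-1}{x^{\alpha/2}}\cdot x^{\alpha/2},
\end{equation*}
where $x^{\alpha/2}=(\mu C_{AC}^2/C_{AB}^2)^{\alpha/2}=\mu^{\alpha/2}C_{AC}^\alpha/C_{AB}^\alpha$, so $C_{AB}^\alpha x^{\alpha/2}=\mu^{\alpha/2}C_{AC}^\alpha$. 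Hence it remains to show $\dfrac{(1+x)^{\alpha/2}-1}{x^{\alpha/2}}\cdot\mu^{\alpha/2}\geq (\mu+l)^{\alpha/2}-l^{\alpha/2}$, i.e.
\begin{equation*}
\frac{(1+x)^{\alpha/2}-1}{x^{\alpha/2}}\geq\frac{(\mu+l)^{\alpha/2}-l^{\alpha/2}}{\mu^{\alpha/2}}
=\left(1+\frac{l}{\mu}\right)^{\frac{\alpha}{2}}-\left(\frac{l}{\mu}\right)^{\frac{\alpha}{2}}.
\end{equation*}
Now the hypothesis $C_{AB}^2\geq lC_{AC}^2$ together with $\mu\geq1$ gives $x=\mu C_{AC}^2/C_{AB}^2\leq\mu/l$, equivalently $1/x\geq l/\mu$. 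Writing $h(z)=(1+z)^{\alpha/2}-z^{\alpha/2}$, the left side equals $h(1/x)$ after dividing numerator and denominator by $x^{\alpha/2}$, so one needs $h(1/x)\geq h(l/\mu)$, which is precisely inequality (\ref{e}) of Lemma~\ref{dd} with $t=\alpha/2\geq1$ applied to the ordering $l/\mu\leq 1/x$.

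Putting the pieces together yields $C_{A|BC}^\alpha\geq C_{AB}^\alpha+\bigl((\mu+l)^{\alpha/2}-l^{\alpha/2}\bigr)C_{AC}^\alpha$, as claimed. The only genuine subtlety I anticipate is bookkeeping the division by $x^{\alpha/2}$ (and the degenerate cases $C_{AC}=0$ or $C_{AB}=0$), together with correctly translating the hypothesis $C_{AB}^2\geq lC_{AC}^2$ into the inequality $l/\mu\leq 1/x$ that feeds Lemma~\ref{dd}; everything else is a direct substitution. It is worth double-checking the edge case $l=1$, where the bound becomes $C_{AB}^\alpha+((\mu+1)^{\alpha/2}-1)C_{AC}^\alpha$, and the case $\mu=1$, where $((1+l)^{\alpha/2}-l^{\alpha/2})$ recovers the coefficient appearing in (\ref{d}) with $k$ replaced by $1/l$, confirming consistency with the prior results.
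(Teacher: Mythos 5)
Your proposal is correct and is essentially the paper's own argument: after the rewriting, your quantity $h(1/x)$ with $1/x=\mu^{-1}C_{AB}^2/C_{AC}^2$ is exactly the bracket $(\mu^{-1}(C_{AB}^2/C_{AC}^2)+1)^{\alpha/2}-(\mu^{-1}(C_{AB}^2/C_{AC}^2))^{\alpha/2}$ that the paper factors out (with coefficient $\mu^{\alpha/2}C_{AC}^\alpha$), and both proofs conclude by applying inequality (\ref{e}) of Lemma \ref{dd} with $t=\alpha/2$ to the ordering $l/\mu\leq \mu^{-1}C_{AB}^2/C_{AC}^2$. The only differences are cosmetic bookkeeping (you factor out $C_{AB}^2$ before converting back, and your intermediate ``$\geq$'' is in fact an equality) plus the same treatment of the degenerate cases.
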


\begin{proof}
By straightforward calculation, we have
\begin{eqnarray}
C_{A|BC}^\alpha&=&(C_{A|BC}^2)^{\frac{\alpha}{2}}\geq(C_{AB}^2+\mu C_{AC}^2)^{\frac{\alpha}{2}}\nonumber\\
&=&\mu^{\frac{\alpha}{2}}C_{AC}^{\alpha}[(\mu^{-1}({C_{AB}^2}/{C_{AC}^2})+1)^{\frac{\alpha}{2}}-(\mu^{-1}({C_{AB}^2}/{C_{AC}^2}))^{\frac{\alpha}{2}}]+C_{AB}^{\alpha}\nonumber\\
&\geq&\mu^\frac{\alpha}{2}C_{AC}^{\alpha}[(\mu^{-1}l+1)^{\frac{\alpha}{2}}-(\mu^{-1}l)^{\frac{\alpha}{2}}]+C_{AB}^{\alpha}\nonumber\\
&=&[(l+\mu)^{\frac{\alpha}{2}}-l^{\frac{\alpha}{2}}]C_{AC}^{\alpha}+C_{AB}^{\alpha},
\end{eqnarray}
where the second inequality is due to Lemma \ref{dd}. We can also see that if $C_{AB}=0$, then $C_{AC}=0$, and the lower bound becomes trivially zero.
\end{proof}

For multiqubit systems, we have the following theorems.
\begin{thm}\label{ff}
Let $\mu_r\geq1$ and $l_r\geq1$ be real numbers, $1\le r\le N-2$. For any $N$-qubit mixed state $\rho_{AB_1\cdots B_{N-1}}\in\mathcal{H}_A\otimes\mathcal{H}_{B_1}\otimes\cdots\otimes\mathcal{H}_{B_{N-1}}$, if $C_{AB_i}^2\geq l_iC_{A|B_{i+1}\cdots B_{N-1}}^2$, $C_{A|B_i\cdots B_{N-1}}^2\geq C_{AB_i}^2+\mu_iC_{A|B_{i+1}\cdots B_{N-1}}^2$ for $i=1,2,\cdots,m$, and $C_{A|B_{j+1}\cdots B_{N-1}}^2\geq l_jC_{AB_j}^2$, $C_{A|B_j\cdots B_{N-1}}^2\geq\mu_jC_{AB_j}^2+C_{A|B_{j+1}\cdots B_{N-1}}^2$ for $j=m+1,
\cdots,N-2$, $1\leq m\leq N-3,N\geq4$, we have
\begin{eqnarray}
C_{A|B_1\cdots B_{N-1}}^\alpha&\geq&C_{AB_1}^\alpha+\mathcal{K}_1C_{AB_2}^\alpha+\cdots+\mathcal{K}_1\cdots\mathcal{K}_{m-1}C_{AB_m}^\alpha\nonumber\\
& &+\mathcal{K}_1\cdots\mathcal{K}_{m}(\mathcal{K}_{m+1}C_{AB_{m+1}}^\alpha+\cdots+\mathcal{K}_{N-2}C_{AB_{N-2}}^\alpha)\nonumber\\
& &+\mathcal{K}_1\cdots\mathcal{K}_{m}C_{AB_{N-1}}^\alpha
\end{eqnarray}
for all $\alpha\geq2$, where $\mathcal{K}_r=(\mu_r+l_r)^{\frac{\alpha}{2}}-l_r^{\frac{\alpha}{2}}$ with $1\le r\le N-2$.
\end{thm}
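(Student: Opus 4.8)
The plan is to establish the inequality by iterating Lemma~\ref{ee} along the chain $B_1,B_2,\dots,B_{N-1}$: in its stated orientation for the indices $1,\dots,m$, and with its two tensor factors interchanged for the indices $m+1,\dots,N-2$. Throughout I would use that each coefficient $\mathcal{K}_r=(\mu_r+l_r)^{\alpha/2}-l_r^{\alpha/2}$ is non-negative (in fact $\mathcal{K}_r\ge 2^{\alpha/2}-1\ge 1$, since $\mu_r,l_r\ge 1$, $\alpha\ge 2$, and inequality~(\ref{e}) of Lemma~\ref{dd} applied with $t=\alpha/2$, $x=l_r$, $y=1$ gives $(1+l_r)^{\alpha/2}-l_r^{\alpha/2}\ge 2^{\alpha/2}-1$), so that each $\mathcal{K}_r$ may be carried through an inequality without reversing it.

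\emph{Peeling off $B_1,\dots,B_m$.} I would apply Lemma~\ref{ee} to the tripartition $A\,|\,B_1\,|\,(B_2\cdots B_{N-1})$: the hypotheses $C_{AB_1}^2\ge l_1C_{A|B_2\cdots B_{N-1}}^2$ and $C_{A|B_1\cdots B_{N-1}}^2\ge C_{AB_1}^2+\mu_1C_{A|B_2\cdots B_{N-1}}^2$ play exactly the roles of ``$C_{AB}^2\ge lC_{AC}^2$'' and ``$C_{A|BC}^2\ge C_{AB}^2+\mu C_{AC}^2$'' in the lemma, whose proof uses only these two scalar inequalities together with Lemma~\ref{dd} and therefore goes through verbatim when ``$C$'' is replaced by the grouped system $B_2\cdots B_{N-1}$. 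This yields
\[
C_{A|B_1\cdots B_{N-1}}^\alpha\ \ge\ C_{AB_1}^\alpha+\mathcal{K}_1\,C_{A|B_2\cdots B_{N-1}}^\alpha .
\]
Repeating this on $C_{A|B_2\cdots B_{N-1}}^\alpha$, then on $C_{A|B_3\cdots B_{N-1}}^\alpha$, and so on down to index $m$ — organized most cleanly as an induction on $m$, with each $\mathcal{K}_r\ge 0$ allowing the accumulated factor to be pulled through — gives
\[
C_{A|B_1\cdots B_{N-1}}^\alpha\ \ge\ C_{AB_1}^\alpha+\mathcal{K}_1C_{AB_2}^\alpha+\cdots+\mathcal{K}_1\cdots\mathcal{K}_{m-1}C_{AB_m}^\alpha+\mathcal{K}_1\cdots\mathcal{K}_m\,C_{A|B_{m+1}\cdots B_{N-1}}^\alpha .
\]

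\emph{Bounding the leftover term.} For the remaining $C_{A|B_{m+1}\cdots B_{N-1}}^\alpha$ I would apply Lemma~\ref{ee} to the tripartition $A\,|\,(B_{j+1}\cdots B_{N-1})\,|\,B_j$, i.e.\ with the two factors swapped: now $C_{A|B_{j+1}\cdots B_{N-1}}^2\ge l_jC_{AB_j}^2$ and $C_{A|B_j\cdots B_{N-1}}^2\ge\mu_jC_{AB_j}^2+C_{A|B_{j+1}\cdots B_{N-1}}^2$ fill the roles of ``$C_{AB}^2\ge lC_{AC}^2$'' and ``$C_{A|BC}^2\ge C_{AB}^2+\mu C_{AC}^2$'', producing
\[
C_{A|B_j\cdots B_{N-1}}^\alpha\ \ge\ C_{A|B_{j+1}\cdots B_{N-1}}^\alpha+\mathcal{K}_j\,C_{AB_j}^\alpha .
\]
Iterating for $j=m+1,\dots,N-2$ and using that $C(\rho_{A|B_{N-1}})=C_{AB_{N-1}}$ is just the bipartite concurrence gives
\[
C_{A|B_{m+1}\cdots B_{N-1}}^\alpha\ \ge\ \mathcal{K}_{m+1}C_{AB_{m+1}}^\alpha+\cdots+\mathcal{K}_{N-2}C_{AB_{N-2}}^\alpha+C_{AB_{N-1}}^\alpha .
\]
Substituting this into the previous display — legitimate since $\mathcal{K}_1\cdots\mathcal{K}_m\ge 0$ — produces precisely the asserted inequality.

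I do not anticipate a genuine obstacle; the content is essentially careful bookkeeping of the product coefficients through two nested inductions. The one point deserving explicit comment is the reduction of Lemma~\ref{ee} to grouped subsystems in both orientations: since the lemma's proof invokes only the two displayed scalar inequalities (its hypotheses) and Lemma~\ref{dd}, it applies unchanged when ``$B$'' or ``$C$'' denotes a composite system $B_k\cdots B_{N-1}$, and those scalar inequalities are exactly what the theorem assumes. I would also check the boundary cases $m=1$ (the first block contributes only $C_{AB_1}^\alpha$) and $m=N-3$ (the second block contains the single index $j=N-2$).
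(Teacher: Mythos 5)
Your proposal is correct and follows essentially the same route as the paper: iterate Lemma~\ref{ee} to peel off $B_1,\dots,B_m$, then apply it with the two factors interchanged (using the swapped hypotheses) to bound $C_{A|B_{m+1}\cdots B_{N-1}}^\alpha$ by $\mathcal{K}_{m+1}C_{AB_{m+1}}^\alpha+\cdots+\mathcal{K}_{N-2}C_{AB_{N-2}}^\alpha+C_{AB_{N-1}}^\alpha$, and combine. Your explicit remarks on $\mathcal{K}_r\ge 0$ and on why the lemma applies to grouped subsystems are points the paper leaves implicit, but the argument is the same.
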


\begin{proof}
From Lemma \ref{ee}, we have
\begin{eqnarray}\label{x}
C_{A|B_1\cdots B_{N-1}}^\alpha&\geq&C_{AB_1}^\alpha+\mathcal{K}_1C_{A|B_2\cdots B_{N-1}}^\alpha\nonumber\\
&\geq&C_{AB_1}^\alpha+\mathcal{K}_1C_{AB_2}^\alpha+\mathcal{K}_1\mathcal{K}_2C_{A|B_3\cdots B_{N-1}}^\alpha\geq\cdots\nonumber\\
&\geq&C_{AB_1}^\alpha+\mathcal{K}_1C_{AB_2}^\alpha+\cdots+\mathcal{K}_1\cdots\mathcal{K}_{m-1}C_{AB_m}^\alpha\nonumber\\
& &+\mathcal{K}_1\cdots\mathcal{K}_{m}C_{A|B_{m+1}\cdots B_{N-1}}^\alpha.
\end{eqnarray}
Since $C_{A|B_{j+1}\cdots B_{N-1}}^2\geq l_jC_{AB_j}^2$, $C_{A|B_j\cdots B_{N-1}}^2\geq\mu_jC_{AB_j}^2+C_{A|B_{j+1}\cdots B_{N-1}}^2$ for $j=m+1,
\cdots,N-2$, we get
\begin{eqnarray}\label{y}
C_{A|B_{m+1}\cdots B_{N-1}}^\alpha&\geq&\mathcal{K}_{m+1}C_{AB_{m+1}}^\alpha+C_{A|B_{m+2}\cdots B_{N-1}}^\alpha\nonumber\\
&\geq&\mathcal{K}_{m+1}C_{AB_{m+1}}^\alpha+\mathcal{K}_{m+2}C_{AB_{m+2}}^\alpha+C_{A|B_{m+3}\cdots B_{N-1}}^\alpha\geq\cdots\nonumber\\
&\geq&\mathcal{K}_{m+1}C_{AB_{m+1}}^\alpha+\mathcal{K}_{m+2}C_{AB_{m+2}}^\alpha+\cdots+\nonumber\\
& &+\mathcal{K}_{N-2}C_{AB_{N-2}}^\alpha+C_{AB_{N-1}}^\alpha.
\end{eqnarray}
Combining (\ref{x}) and (\ref{y}), we complete the proof.
\end{proof}

An immediate corollary of Theorem \ref{ff}, we have in particular,

\begin{thm}
Let $\mu_r\geq1$ and $l_r\geq1$ be real numbers, $1\le r\le N-2$.
For any $N$-qubit mixed state $\rho_{AB_1\cdots B_{N-1}}\in\mathcal{H}_A\otimes\mathcal{H}_{B_1}\otimes\cdots\otimes\mathcal{H}_{B_{N-1}}$, if $C_{AB_i}^2\geq l_iC_{A|B_{i+1}\cdots B_{N-1}}^2$, $C_{A|B_i\cdots B_{N-1}}^2\geq C_{AB_i}^2+\mu_iC_{A|B_{i+1}\cdots B_{N-1}}^2$ for all $i=1,2,\cdots,N-2$, then we have
\begin{eqnarray}
C_{A|B_1\cdots B_{N-1}}^\alpha\geq C_{AB_1}^\alpha+\mathcal{K}_1C_{AB_2}^\alpha
+\cdots+\mathcal{K}_1\cdots\mathcal{K}_{N-2}C_{AB_{N-1}}^\alpha,
\end{eqnarray}
for all $\alpha\geq2$, where $\mathcal{K}_r=(\mu_r+l_r)^{\frac{\alpha}{2}}-l_r^{\frac{\alpha}{2}}$ with $1\le r\le N-2$.
\end{thm}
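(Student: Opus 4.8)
The plan is to read off this statement as the extreme case of Theorem \ref{ff} in which the ``second group'' of indices $\{m+1,\dots,N-2\}$ is empty, i.e.\ formally $m=N-2$, so that only the first regime $C_{AB_i}^2\geq l_iC_{A|B_{i+1}\cdots B_{N-1}}^2$, $C_{A|B_i\cdots B_{N-1}}^2\geq C_{AB_i}^2+\mu_iC_{A|B_{i+1}\cdots B_{N-1}}^2$ ever occurs. Since Theorem \ref{ff} is stated for $1\leq m\leq N-3$, I would not literally substitute $m=N-2$ into its statement but instead reproduce, in this degenerate situation, the inequality chain (\ref{x}) from its proof, pushing it all the way down to the last subsystem rather than truncating it at stage $m$.

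Concretely, first I would observe that under the present hypotheses Lemma \ref{ee} applies at every stage $i\in\{1,\dots,N-2\}$: viewing $\rho_{A|B_i\cdots B_{N-1}}$ as a tripartite-type state with $B=B_i$ and $C=B_{i+1}\cdots B_{N-1}$, the inequality $C_{A|B_i\cdots B_{N-1}}^2\geq C_{AB_i}^2+\mu_iC_{A|B_{i+1}\cdots B_{N-1}}^2$ plays the role of (\ref{mu}) with constant $\mu_i\geq1$, and $C_{AB_i}^2\geq l_iC_{A|B_{i+1}\cdots B_{N-1}}^2$ is the remaining condition of the lemma. Hence
$$C_{A|B_i\cdots B_{N-1}}^\alpha\;\geq\;C_{AB_i}^\alpha+\mathcal{K}_i\,C_{A|B_{i+1}\cdots B_{N-1}}^\alpha,\qquad \mathcal{K}_i=(\mu_i+l_i)^{\frac{\alpha}{2}}-l_i^{\frac{\alpha}{2}},$$
for each such $i$ and all $\alpha\geq2$.

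Next I would iterate this bound starting from $i=1$, at each step substituting the estimate for the ``remainder'' $C_{A|B_{i+1}\cdots B_{N-1}}^\alpha$ into the previous line and collecting the accumulated factor $\mathcal{K}_1\cdots\mathcal{K}_i$ in front. Because $\mu_r\geq1$ and $x\mapsto x^{\alpha/2}$ is increasing we have $\mathcal{K}_r\geq0$ for every $r$, so multiplying a valid lower bound by $\mathcal{K}_1\cdots\mathcal{K}_i$ preserves the inequality; this is the only point that needs to be checked for the substitutions to be legitimate. After $N-2$ steps the remainder collapses, since $C_{A|B_{N-1}}=C_{AB_{N-1}}$, to $\mathcal{K}_1\cdots\mathcal{K}_{N-2}C_{AB_{N-1}}^\alpha$, and one arrives at
$$C_{A|B_1\cdots B_{N-1}}^\alpha\geq C_{AB_1}^\alpha+\mathcal{K}_1C_{AB_2}^\alpha+\cdots+\mathcal{K}_1\cdots\mathcal{K}_{N-2}C_{AB_{N-1}}^\alpha,$$
which is the claim. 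There is no real obstacle here: all the analytic content sits in Lemma \ref{dd} and Lemma \ref{ee}, and the bookkeeping is identical to the chain (\ref{x}) already carried out in the proof of Theorem \ref{ff}; the only thing being changed is that the hypotheses are now imposed for all $i$ up to $N-2$, so the chain is never interrupted and one never passes to the second regime.
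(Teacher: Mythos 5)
Your argument is correct and is essentially the paper's own: the paper presents this statement as an immediate corollary of Theorem \ref{ff}, whose proof is exactly the iteration of Lemma \ref{ee} in the chain (\ref{x}), and you simply carry that chain through all $N-2$ stages since every index now lies in the first regime. Your extra remarks (not literally setting $m=N-2$ in the statement of Theorem \ref{ff}, and checking $\mathcal{K}_r\geq 0$ so the substitutions are legitimate) are sound and, if anything, slightly more careful than the paper's presentation.
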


\begin{rmk}
Since
\begin{equation}\label{z}
(\mu+l)^{\frac{\alpha}{2}}-l^{\frac{\alpha}{2}}\geq(1+l)^{\frac{\alpha}{2}}-l^{\frac{\alpha}{2}}\geq(2)^{\frac{\alpha}{2}}-l
\end{equation}
for $\alpha\geq2$, $\mu\geq1$ and $l\geq1$, we have $(1+l)^{\frac{\alpha}{2}}-l^{\frac{\alpha}{2}}=\frac{(1+k)^\frac{\alpha}{2}-1}{k^\frac{\alpha}{2}}$ if $l=\frac{1}{k}$ with $0 \textless k\leq1$. In (\ref{z}) the first equality holds when $\mu=1$ and the second equality holds when $l=1$. For given $l$, the bigger the $\mu$ is, the tighter the inequality in Theorem \ref{ff} is. Therefore, our new monogamy relation for concurrence is better than the ones in \cite{25,27}.
\end{rmk}

\begin{eg}\label{a}
Let us consider the three-qubit state $|\phi\rangle_{ABC}$ in the generalized Schmidt decomposition from \cite{36,37},
\begin{equation}
|\phi\rangle_{ABC}=\lambda_0|000\rangle+\lambda_1e^{i\varphi}|100\rangle+\lambda_2|101\rangle+\lambda_3|110\rangle+\lambda_4|111\rangle,
\end{equation}
where $\lambda_i\geq0$, $i=0,1,\cdots,4$, and $\sum\limits_{i=0}^{4}\lambda_i^2=1$.
One gets $C_{A|BC}=2\lambda_0\sqrt{\lambda_2^2+\lambda_3^2+\lambda_4^2}$, $C_{AB}=2\lambda_0\lambda_2$ and $C_{AC}=2\lambda_0\lambda_3$. Setting $\lambda_0=\lambda_3=\lambda_4={1}/{\sqrt{5}}$, $\lambda_2=\sqrt{{2}/{5}}$ and $\lambda_1=0$, we have $C_{A|BC}={4}/{5}$, $C_{AB}={2\sqrt{2}}/{5}$ and $C_{AC}={2}/{5}$. Therefore,
\begin{equation}\label{g}
C_{AB}^\alpha+(2^\frac{\alpha}{2}-1)C_{AC}^\alpha=(2\sqrt{2}/{5})^\alpha+(2^\frac{\alpha}{2}-1)(2/{5})^\alpha,
\end{equation}
\begin{equation}\label{h}
C_{AB}^\alpha+(((1+k)^\frac{\alpha}{2}-1)/{k^\frac{\alpha}{2}})C_{AC}^\alpha=({2\sqrt{2}}/{5})^\alpha+({(1+k)^\frac{\alpha}{2}-1})/{k^\frac{\alpha}{2}})({2}/{5})^\alpha,
\end{equation}
\begin{equation}\label{i}
C_{AB}^\alpha+((\mu+l)^\frac{\alpha}{2}-l^\frac{\alpha}{2})C_{AC}^\alpha=({2\sqrt{2}}/{5})^\alpha+((\mu+l)^\frac{\alpha}{2}-l^\frac{\alpha}{2})({2}/{5})^\alpha.
\end{equation}
When $k=0.5$ the lower bound (\ref{h}) gives the best result. When $l=\frac{1}{k}=2,\mu=1$ the lower bound (\ref{i}) gives the same result as (\ref{h}). But when $l=\frac{1}{k}=2$ and $1\textless\mu\leq2$, the lower bound (\ref{i}) is better than (\ref{h}). It can be seen that our result is better than the result (\ref{h}) in \cite{27} for $\alpha\geq2$, hence better than (\ref{g}) given in \cite{25}, see Figure 1.
\end{eg}
\begin{figure}[h]
\begin{center}
\includegraphics[width=8cm,clip]{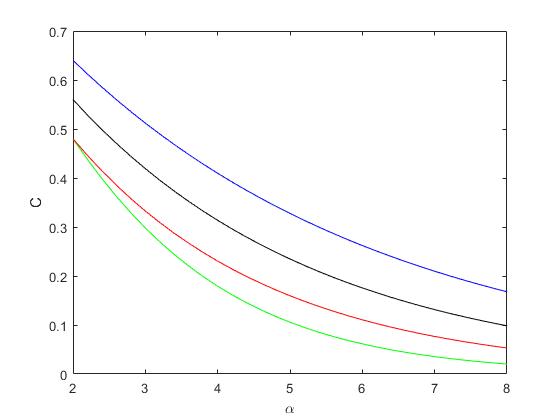}
\caption{From top to bottom, the first curve represents the concurrence of $|\phi\rangle_{A|BC}$ in Example \ref{a}, the third and fourth curves represent the lower bounds from \cite{27} and \cite{25}, respectively. The second curve represents the lower bound from our result.}	
\end{center}
\end{figure}

\section{\bf Tighter monogamy and polygamy relations for EoF}

Let $\mathcal{H}_A$ and $\mathcal{H}_B$ be $m$ and $n$ dimensional $(m\leq n)$ vector space, respectively. The entanglement of formation (EoF) of a pure state $|\phi\rangle\in\mathcal{H}_A\otimes\mathcal{H}_B$ is defined by $E(|\phi\rangle)=S(\rho_A)$,
where $\rho_A=\Tr(|\phi\rangle\langle\phi|)$ and $S(\rho)=-\Tr(\rho\log_2\rho)$ \cite{38,39}. For a bipartite mixed state $\rho_{AB}\in\mathcal{H}_A\otimes\mathcal{H}_B$, the EoF is given by
\begin{equation}
E(\rho_{AB})=\min\limits_{\{p_i,|\phi_i\rangle\}}\sum\limits_{i}p_iE(|\phi_i\rangle),
\end{equation}
with the minimum taking over all possible pure state decomposition of $\rho_{AB}$.

Denote by $f(x)=H(\frac{1+\sqrt{1-x}}{2})$, where $H(x)=-x\log_2(x)-(1-x)\log_2(1-x)$. It is obvious that $f(x)$ is a monotonically increasing function for $0\leq x\leq1$ which satisfies
\begin{equation}\label{m}
f^{\sqrt{2}}(x^2+y^2)\geq f^{\sqrt{2}}(x^2)+f^{\sqrt{2}}(y^2),
\end{equation}
\begin{equation}
f(x^2+y^2)\leq f(x^2)+f(y^2),
\end{equation}
where $f^{\sqrt{2}}(x^2+y^2)=[f(x^2+y^2)]^{\sqrt{2}}$. It is showed in \cite{10} that $E(|\varphi\rangle)=f(C^2(|\varphi\rangle))$ for $2\otimes m $ ($m\geq2$) pure state $|\varphi\rangle$, and $E(\rho)=f(C^2(\rho))$ for two-qubit mixed state $\rho$.

EoF does not satisfy the inequality $E_{A|BC}\geq E_{AB}+E_{AC}$ \cite{11}. In \cite{40} it is shown that EoF is a monotonic function: $E^2(C_{A|B_1B_2\cdots B_{N-1}}^2)\geq E^2(\sum\limits_{i=1}^{N-1}C_{AB_i}^2)$. It is further proved that for $N$-qubit systems, $E_{A|B_1B_2\cdots B_{N-1}}^\alpha\geq E_{AB_1}^\alpha+E_{AB_2}^\alpha+\cdots+E_{AB_{N-1}}^\alpha$ for $\alpha\geq\sqrt{2}$, where $E_{A|B_1B_2\cdots B_{N-1}}$ is the EoF of $\rho$ in bipartite partition $A|B_1B_2\cdots B_{N-1}$, and $E_{AB_i}$, $i=1,2,\cdots,N-1$, is the EoF of the bipartite states $\rho_{AB_i}=\Tr_{B_1B_2\cdots B_{i-1}B_{i+1}\cdots B_{N-1}}(\rho)$ \cite{24}.

For $N$-qubit systems, the following monogamy relation has been obtained,
\begin{eqnarray}
E^{\alpha}(\rho_{A|B_1\cdots B_{N-1}})&\geq&E^{\alpha}(\rho_{AB_1})+(2^{\frac{\alpha}{\sqrt{2}}}-1)E^{\alpha}(\rho_{AB_2})+\cdots+(2^{\frac{\alpha}{\sqrt{2}}}-1)^{m-1}E^{\alpha}(\rho_{AB_m})\nonumber\\
& &+(2^{\frac{\alpha}{\sqrt{2}}}-1)^{m+1}(E^{\alpha}(\rho_{AB_{m+1}})+\cdots+E^{\alpha}(\rho_{AB_{N-2}}))\nonumber\\
& &+(2^{\frac{\alpha}{\sqrt{2}}}-1)^mE^{\alpha}(\rho_{AB_{N-1}})
\end{eqnarray}
for $\alpha\geq\sqrt{2}$, with the conditions $C(\rho_{AB_i})\geq C(\rho_{A|B_{i+1}\cdots B_{N-1}})$ for $i=1,2,\cdots,m$, and $C(\rho_{AB_j})\leq C(\rho_{A|B_{j+1}\cdots B_{N-1}})$ for $j=m+1,\cdots,N-2$, $1\leq m\leq N-3$, $N\geq4$ \cite{25}.
The inequality (\ref{l}) is a further improvement \cite{27} as
\begin{eqnarray}\label{l}
E^{\alpha}(\rho_{A|B_1 \cdots B_{N-1}})&\geq&E^{\alpha}(\rho_{AB_1})+(((1+k)^{\frac{\alpha}{\sqrt{2}}}-1)/{k^{\frac{\alpha}{\sqrt{2}}}})E^{\alpha}(\rho_{AB_2})+\cdots\nonumber\\
& &+(((1+k)^{\frac{\alpha}{\sqrt{2}}}-1)/{k^{\frac{\alpha}{\sqrt{2}}}})^{m-1}E^{\alpha}(\rho_{AB_m})\nonumber\\
& &+(((1+k)^{\frac{\alpha}{\sqrt{2}}}-1)/{k^{\frac{\alpha}{\sqrt{2}}}})^{m+1}(E^{\alpha}(\rho_{AB_{m+1}})+\cdots+E^{\alpha}(\rho_{AB_{N-2}}))\nonumber\\
& &+(((1+k)^{\frac{\alpha}{\sqrt{2}}}-1)/{k^{\frac{\alpha}{\sqrt{2}}}})^mE^{\alpha}(\rho_{AB_{N-1}}),
\end{eqnarray}
for $\alpha\geq\sqrt{2}$, with $kE^{\sqrt{2}}(\rho_{AB_i})\geq E^{\sqrt{2}}(\rho_{A|B_{i+1}\cdots B_{N-1}})$ for $i=1,2,\cdots,m$, and $E^{\sqrt{2}}(\rho_{AB_j})\leq kE^{\sqrt{2}}(\rho_{A|B_{j+1}\cdots B_{N-1}})$ for $j=m+1,\cdots,N-2$, $1\leq m\leq N-3$, $N\geq4$ and $0<k\leq1$.

The corresponding entanglement of assistance (EoA) is defined in terms of the entropy of entanglement for a tripartite pure state $|\phi\rangle_{ABC}$,
\begin{equation}
E_a(|\phi\rangle_{ABC})\equiv E_a(\rho_{AB})=\max\limits_{\{p_i,|\phi_i\rangle\}}\sum\limits_{i}p_iE(|\phi_i\rangle),
\end{equation}
where the maximum is taken over all possible pure state decompositions of $\rho_{AB}=\Tr_C(|\phi\rangle_{ABC}\langle\phi|)=\sum\limits_ip_i|\phi_i\rangle_{AB}\langle\phi_i|$ with $p_i\geq0$ and $\sum\limits_{i}p_i=1$ \cite{41}. For an arbitrary dimensional multipartite quantum state $\rho_{AB_1B_2\cdots B{N-1}}$, a general polygamy inequality of multipartite quantum entanglement was established in \cite{21},
\begin{equation}
E_a(\rho_{A|B_1B_2\cdots B{N-1}})\leq\sum\limits_{i}^{N-1}E_a(\rho_{AB_i}).
\end{equation}

In the following, we show that these monogamy and polygamy relations for EoF can become even tighter under some conditions. For convenience, we denote by $E_{AB_j}=E(\rho_{AB_j})$ for $j=1,2,\cdots,N-1$, and $E_{A|B_1B_2\cdots B_{N-1}}=E(\rho_{A|B_1 B_2\cdots B_{N-1}})$.

\begin{thm}\label{gg}
Let $\mu_r\geq1$ and $l_r\geq1$ be real numbers, $1\le r\le N-2$. For any $N$-qubit mixed state $\rho_{AB_1\cdots B_{N-1}}\in\mathcal{H}_A\otimes\mathcal{H}_{B_1}\otimes\cdots\otimes\mathcal{H}_{B_{N-1}}$, if $E^{\sqrt{2}}_{AB_i}\geq l_iE^{\sqrt{2}}_{A|B_{i+1}\cdots B_{N-1}}$, $E^{\sqrt{2}}_{A|B_i\cdots B_{N-1}}\geq E^{\sqrt{2}}_{AB_i}+\mu_iE^{\sqrt{2}}_{A|B_{i+1}\cdots B_{N-1}}$ for $i=1,2,\cdots,m$, and $E^{\sqrt{2}}_{A|B_{j+1}\cdots B_{N-1}}\geq l_jE^{\sqrt{2}}_{AB_j}$, $E^{\sqrt{2}}_{A|B_j\cdots B_{N-1}}\geq\mu_jE^{\sqrt{2}}_{AB_j}+E^{\sqrt{2}}_{A|B_{j+1}\cdots B_{N-1}}$ for $j=m+1,
\cdots,N-2$, $1\leq m\leq N-3,N\geq4$, then
\begin{eqnarray}
E_{A|B_1\cdots B_{N-1}}^\alpha&\geq& E_{AB_1}^\alpha+\mathcal{K}_1E_{AB_2}^\alpha+\cdots+\mathcal{K}_1\cdots\mathcal{K}_{m-1}E_{AB_m}^\alpha\nonumber\\
& &+\mathcal{K}_1\cdots\mathcal{K}_{m}(\mathcal{K}_{m+1}E_{AB_{m+1}}^\alpha+\cdots+\mathcal{K}_{N-2}E_{AB_{N-2}}^\alpha)\nonumber\\
& &+\mathcal{K}_1\cdots\mathcal{K}_{m}E_{AB_{N-1}}^\alpha
\end{eqnarray}
for all $\alpha\geq\sqrt{2}$, where $\mathcal{K}_r=(\mu_r+l_r)^{\frac{\alpha}{\sqrt{2}}}-l_r^{\frac{\alpha}{\sqrt{2}}}$
with $1\le r\le N-2$.
\end{thm}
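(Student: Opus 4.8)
The plan is to follow verbatim the route used for concurrence in Lemma~\ref{ee} and Theorem~\ref{ff}, with the squared concurrence $C^2$ everywhere replaced by the power $E^{\sqrt 2}$ of the entanglement of formation. The reason this works is that, since $E=f(C^2)$ with $f$ monotonically increasing and satisfying the superadditivity property (\ref{m}), EoF inherits a monogamy relation of the form $E^{\sqrt 2}_{A|BC}\ge E^{\sqrt 2}_{AB}+E^{\sqrt 2}_{AC}$ on any $2\otimes 2\otimes 2^{n-2}$ state (this follows from (\ref{m}), the monotonicity of $f$, and the concurrence monogamy (\ref{b})). The hypotheses of the theorem then simply record, for each partition, the constants $\mu_r\ge 1$ and $l_r\ge 1$ that control the split of this relation between the two summands.

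First I would establish the EoF analogue of Lemma~\ref{ee}: if a $2\otimes 2\otimes 2^{n-2}$ state satisfies $E^{\sqrt 2}_{A|BC}\ge E^{\sqrt 2}_{AB}+\mu E^{\sqrt 2}_{AC}$ together with $E^{\sqrt 2}_{AB}\ge l E^{\sqrt 2}_{AC}$, then $E^\alpha_{A|BC}\ge E^\alpha_{AB}+\big((\mu+l)^{\alpha/\sqrt 2}-l^{\alpha/\sqrt 2}\big)E^\alpha_{AC}$ for all $\alpha\ge\sqrt 2$. The proof copies that of Lemma~\ref{ee} with the exponent $t=\alpha/\sqrt 2\ge 1$ in place of $\alpha/2$: write $E^\alpha_{A|BC}=(E^{\sqrt 2}_{A|BC})^{t}\ge(E^{\sqrt 2}_{AB}+\mu E^{\sqrt 2}_{AC})^{t}$, factor out $\mu^{t}E^\alpha_{AC}$, put $x=\mu^{-1}(E^{\sqrt 2}_{AB}/E^{\sqrt 2}_{AC})$ and $y=\mu^{-1}l$ so that $y\le x$, and apply inequality (\ref{e}) of Lemma~\ref{dd} to get $(1+x)^t-x^t\ge(1+y)^t-y^t$; the identity $\mu^{t}\big[(1+\mu^{-1}l)^t-(\mu^{-1}l)^t\big]=(\mu+l)^t-l^t$ finishes it, with the degenerate case $E_{AB}=0$ (forcing $E_{AC}=0$) handled as before. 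Interchanging the two summands gives the companion inequality $E^\alpha_{A|B_j\cdots B_{N-1}}\ge E^\alpha_{A|B_{j+1}\cdots B_{N-1}}+\mathcal{K}_jE^\alpha_{AB_j}$ needed for the indices $j>m$.

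Then I would iterate exactly as in the proof of Theorem~\ref{ff}. Using the first family of hypotheses, $m-1$ successive applications of the lemma to the shrinking tails $A|B_2\cdots B_{N-1}$, $A|B_3\cdots B_{N-1}$, and so on, yield
$E^\alpha_{A|B_1\cdots B_{N-1}}\ge E^\alpha_{AB_1}+\mathcal{K}_1E^\alpha_{AB_2}+\cdots+\mathcal{K}_1\cdots\mathcal{K}_{m-1}E^\alpha_{AB_m}+\mathcal{K}_1\cdots\mathcal{K}_m E^\alpha_{A|B_{m+1}\cdots B_{N-1}}$;
then, using the second family of hypotheses, the companion inequality gives $E^\alpha_{A|B_{m+1}\cdots B_{N-1}}\ge\mathcal{K}_{m+1}E^\alpha_{AB_{m+1}}+\cdots+\mathcal{K}_{N-2}E^\alpha_{AB_{N-2}}+E^\alpha_{AB_{N-1}}$, and substituting completes the proof. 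There is no genuinely new obstacle here — the whole argument is transported from the concurrence case; the only points needing care are the bookkeeping of the two hypothesis families and using $\alpha\ge\sqrt 2$ precisely where the exponent $\alpha/\sqrt 2\ge 1$ is required (so that Lemma~\ref{dd} applies). The one structural difference worth flagging is that, unlike $C^2_{A|BC}\ge C^2_{AB}+C^2_{AC}$ which is immediate from (\ref{b}), the underlying EoF monogamy $E^{\sqrt 2}_{A|BC}\ge E^{\sqrt 2}_{AB}+E^{\sqrt 2}_{AC}$ relies on combining $E=f(C^2)$ with (\ref{m}) and the monotonicity of $f$, and it is this that justifies taking $\mu_r\ge 1$ in the statement.
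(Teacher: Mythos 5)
Your argument is correct, and it reaches the result by a slightly different route than the paper. You work directly with the quantities $E^{\sqrt 2}$: the theorem's hypotheses are treated as purely numerical constraints, the two-term step $E^{\alpha}_{A|B_i\cdots B_{N-1}}\ge E^{\alpha}_{AB_i}+\mathcal{K}_iE^{\alpha}_{A|B_{i+1}\cdots B_{N-1}}$ (and its companion for $j>m$) follows from Lemma \ref{dd} with $t=\alpha/\sqrt2\ge1$ exactly as in Lemma \ref{ee}, and the iteration mirrors Theorem \ref{ff}. The paper instead routes the whole argument through the function $f$ and the concurrence: it first proves the $f$-level inequality (\ref{o}) under conditions on $f^{\sqrt2}(C^2)$, uses the bound $E_{A|B_1\cdots B_{N-1}}\ge f(C^2_{A|B_1\cdots B_{N-1}})$ from (\ref{n}) together with concurrence monogamy to expand $f^{\alpha}(C^2_{A|B_1\cdots B_{N-1}})$ iteratively, and only at the end converts back via $E_{AB_i}=f(C^2_{AB_i})$ for the two-qubit marginals. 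Your version buys something real: it uses the hypotheses exactly as they are stated (in terms of $E^{\sqrt 2}$ of the tail partitions), whereas the paper's chain needs the corresponding conditions at the level of $f^{\sqrt2}$ applied to concurrences, and the passage from the stated $E$-conditions to those $f$-conditions (especially the $\mu_i$ ones) is left implicit there, since for the multi-qubit tails one only has $E\ge f(C^2)$ rather than equality. What the paper's route buys in exchange is the structural explanation, via (\ref{m}), (\ref{b}) and the monotonicity of $f$, of why constants $\mu_i\ge1$ exist at all; you correctly relegate that to a motivating remark, since the proof itself only needs the hypotheses. The degenerate cases (vanishing second argument, where one divides by it) are trivial and you flag them appropriately.
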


\begin{proof}
Consider $\alpha\geq\sqrt{2}$ and $f^{\sqrt{2}}(x^2)\geq lf^{\sqrt{2}}(y^2)$. Due to inequality (\ref{m}), there exists $\mu\geq1$ such that $f^{\sqrt{2}}(x^2+y^2)\geq f^{\sqrt{2}}(x^2)+\mu f^{\sqrt{2}}(y^2)$. Hence we have
\begin{eqnarray}\label{o}
f^{\alpha}(x^2+y^2)&=&[f^{\sqrt{2}}(x^2+y^2)]^{\frac{\alpha}{\sqrt{2}}}\geq[f^{\sqrt{2}}(x^2)+\mu f^{\sqrt{2}}(y^2)]^{\frac{\alpha}{\sqrt{\sqrt{2}}}}\nonumber\\
&=&\mu^{\frac{\alpha}{\sqrt{2}}}f^{\alpha}(y^2)[(\mu^{-1}({f^{\sqrt{2}}(x^2)}/{f^{\sqrt{2}}(y^2)})+1)^{\frac{\alpha}{\sqrt{2}}}-(\mu^{-1}({f^{\sqrt{2}}(x^2)}/{f^{\sqrt{2}}(y^2)}))^{\frac{\alpha}{\sqrt{2}}}]+f^{\alpha}(x^2)\nonumber\\
&\geq&\mu^{\frac{\alpha}{\sqrt{2}}}f^{\alpha}(y^2)[(\mu^{-1}l+1)^{\frac{\alpha}{\sqrt{2}}}-(\mu^{-1}l)^{\frac{\alpha}{\sqrt{2}}}]+f^{\alpha}(x^2)\nonumber\\
&=&[(\mu+l)^{\frac{\alpha}{\sqrt{2}}}-l^{\frac{\alpha}{\sqrt{2}}}]f^{\alpha}(y^2)+f^{\alpha}(x^2),
\end{eqnarray}
where the second inequality is obtained from inequality (\ref{e}).
Let $\rho=\sum_ip_i|\phi_i\rangle\langle\phi|\in\mathcal{H}_A\otimes\mathcal{H}_{B_1}\otimes\cdots\otimes\mathcal{H}_{B_{N-1}}$ be the optimal decomposition of $E_{A|B_1\cdots B_{N-1}}(\rho)$ for the $N$-qubit mixed state $\rho$. Then from \cite{25}
\begin{equation}\label{n}
E_{A|B_1\cdots B_{N-1}}\geq f(C_{A|B_1\cdots B_{N-1}}^2).
\end{equation}
Therefore,
\begin{eqnarray}
E_{A|B_1\cdots B_{N-1}}^\alpha&\geq&f^\alpha(C_{A|B_1\cdots B_{N-1}}^2)\nonumber\\
&\geq&f^\alpha(C_{AB_1}^2)+\mathcal{K}_1f^\alpha(C_{AB_2}^2)+\cdots+\mathcal{K}_1\cdots\mathcal{K}_{m-1}f^\alpha(C_{AB_m}^2)\nonumber\\
& &+\mathcal{K}_1\cdots\mathcal{K}_{m}(\mathcal{K}_{m+1}f^\alpha(C_{AB_{m+1}}^2)+\cdots+\mathcal{K}_{N-2}f^\alpha(C_{AB_{N-2}}^2))\nonumber\\
& &+\mathcal{K}_1\cdots\mathcal{K}_{m}f^\alpha(C_{AB_{N-1}}^2)\nonumber\\
&=&E^\alpha_{AB_1}+\mathcal{K}_1E^\alpha_{AB_2}+\cdots+\mathcal{K}_1\cdots\mathcal{K}_{m-1}E^\alpha_{AB_m}\nonumber\\
& &+\mathcal{K}_1\cdots\mathcal{K}_{m}(\mathcal{K}_{m+1}E^\alpha_{AB_{m+1}}+\cdots+\mathcal{K}_{N-2}E^\alpha_{AB_{N-2}})\nonumber\\
& &+\mathcal{K}_1\cdots\mathcal{K}_{m}E^\alpha_{AB_{N-1}},
\end{eqnarray}
where the first inequality is due to (\ref{n}), the second inequality is obtained, similar to the proof of Theorem \ref{ff}, by using inequality (\ref{o}). The last equality holds since for any $2\otimes2$ quantum state $\rho_{AB_i}$, $E(\rho_{AB_i})=f[C^2(\rho_{AB_i})]$.
\end{proof}

In particular, we have

\begin{thm}
Let $\mu_r\geq1$ and $l_r\geq1$ be real numbers, $1\le r\le N-2$. For any $N$-qubit mixed state $\rho_{AB_1\cdots B_{N-1}}\in\mathcal{H}_A\otimes\mathcal{H}_{B_1}\otimes\cdots\otimes\mathcal{H}_{B_{N-1}}$, if $E^{\sqrt{2}}_{AB_i}\geq l_iE^{\sqrt{2}}_{A|B_{i+1}\cdots B_{N-1}}$, $E^{\sqrt{2}}_{A|B_i\cdots B_{N-1}}\geq E^{\sqrt{2}}_{AB_i}+\mu_iE^{\sqrt{2}}_{A|B_{i+1}\cdots B_{N-1}}$ for $i=1,2,\cdots,N-2$, then
\begin{eqnarray}
E_{A|B_1\cdots B_{N-1}}^\alpha\geq E_{AB_1}^\alpha+\mathcal{K}_1E_{AB_2}^\alpha+\cdots+\mathcal{K}_1\cdots\mathcal{K}_{N-2}E_{AB_{N-1}}^\alpha
\end{eqnarray}
for all $\alpha\geq\sqrt{2}$, where $\mathcal{K}_r=(\mu_r+l_r)^{\frac{\alpha}{\sqrt{2}}}-l_r^{\frac{\alpha}{\sqrt{2}}}$
with $1\le r\le N-2$.
\end{thm}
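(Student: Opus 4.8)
The plan is to recognize this statement as the boundary case $m=N-2$ of Theorem~\ref{gg}, which the formulation there excludes only because the ``$j$-regime'' conditions $j=m+1,\dots,N-2$ become vacuous when $m=N-2$; equivalently, one re-runs the first half of the proof of Theorem~\ref{gg} — the chain displayed in~(\ref{x}), transcribed with $\alpha/2$ replaced by $\alpha/\sqrt{2}$ and $C$ by $E$ — all the way to the terminal index instead of stopping at $m$. The basic building block is the one-step estimate that already sits inside~(\ref{o}): whenever $\mu,l\ge1$, $\alpha\ge\sqrt{2}$, and
\[
E^{\sqrt{2}}_{AB}\ge l\,E^{\sqrt{2}}_{AC},\qquad E^{\sqrt{2}}_{A|BC}\ge E^{\sqrt{2}}_{AB}+\mu\,E^{\sqrt{2}}_{AC},
\]
then $E^{\alpha}_{A|BC}\ge E^{\alpha}_{AB}+\bigl[(\mu+l)^{\alpha/\sqrt{2}}-l^{\alpha/\sqrt{2}}\bigr]E^{\alpha}_{AC}$. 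This is exactly the derivation~(\ref{o}): expand $[E^{\sqrt{2}}_{AB}+\mu E^{\sqrt{2}}_{AC}]^{\alpha/\sqrt{2}}$, reduce the ratio $E^{\sqrt{2}}_{AB}/E^{\sqrt{2}}_{AC}$ to its lower bound $l$ by the monotonicity in Lemma~\ref{dd} (inequality~(\ref{e})), and use the monotonicity bound~(\ref{n}) for the tail together with $E(\rho_{AB_i})=f(C^2(\rho_{AB_i}))$ on the two-qubit marginals.

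With this block in hand I would iterate for $i=1,2,\dots,N-2$, at each stage peeling off $B_i$ and retaining $A|B_{i+1}\cdots B_{N-1}$ as the surviving block. The hypotheses $E^{\sqrt{2}}_{AB_i}\ge l_iE^{\sqrt{2}}_{A|B_{i+1}\cdots B_{N-1}}$ and $E^{\sqrt{2}}_{A|B_i\cdots B_{N-1}}\ge E^{\sqrt{2}}_{AB_i}+\mu_iE^{\sqrt{2}}_{A|B_{i+1}\cdots B_{N-1}}$ are precisely the two inequalities required at stage $i$, so stage $i$ multiplies the coefficient of the surviving tail by $\mathcal{K}_i=(\mu_i+l_i)^{\alpha/\sqrt{2}}-l_i^{\alpha/\sqrt{2}}$. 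After $N-2$ stages the tail has collapsed to $A|B_{N-1}$, and since $E_{A|B_{N-1}}=E_{AB_{N-1}}$ the accumulated coefficients $1,\ \mathcal{K}_1,\ \mathcal{K}_1\mathcal{K}_2,\ \dots,\ \mathcal{K}_1\cdots\mathcal{K}_{N-2}$ end up in front of $E^{\alpha}_{AB_1},E^{\alpha}_{AB_2},\dots,E^{\alpha}_{AB_{N-1}}$, which is the asserted bound.

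There is no genuine obstacle beyond bookkeeping; the only point deserving care is interleaving the two ingredients correctly. It is cleanest to run the whole estimate at the level of $f$: start from $E_{A|B_1\cdots B_{N-1}}\ge f\bigl(C^2_{A|B_1\cdots B_{N-1}}\bigr)\ge f\bigl(C^2_{AB_1}+\cdots+C^2_{AB_{N-1}}\bigr)$ via~(\ref{n}) and the concurrence monogamy, apply~(\ref{o}) iteratively to the $f$-values exactly as in the proof of Theorem~\ref{gg}, and only at the very end rewrite each two-qubit term through $E(\rho_{AB_i})=f(C^2(\rho_{AB_i}))$; the stated $E^{\sqrt{2}}$-hypotheses then feed directly into the $f^{\sqrt{2}}(C^2)$-form needed in~(\ref{o}). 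One should also note the trivial case $N=3$, where there is a single index $i=1$ and the claim is just the one-step estimate above with no iteration required.
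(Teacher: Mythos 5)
Your proposal is correct and follows essentially the paper's own route: the statement is the all-$i$ case of Theorem~\ref{gg}, obtained by iterating the one-step bound of type~(\ref{o}) (Lemma~\ref{dd} applied with exponent $\alpha/\sqrt{2}$) down to the last party, where $E_{A|B_{N-1}}=E_{AB_{N-1}}$. Note only that since the hypotheses are stated directly on the $E^{\sqrt{2}}$ quantities, your first, purely algebraic iteration on the $E$-values already suffices (handling the trivial case $E_{A|B_{i+1}\cdots B_{N-1}}=0$ separately), so the detour through $f$, (\ref{n}) and concurrence monogamy is not needed.
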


\begin{rmk}
Since
$(\mu+l)^{\frac{\alpha}{\sqrt{2}}}-l^{\frac{\alpha}{\sqrt{2}}}\geq(1+l)^{\frac{\alpha}{\sqrt{2}}}-l^{\frac{\alpha}{\sqrt{2}}}\geq(2)^{\frac{\alpha}{\sqrt{2}}}-l$,
where $\alpha\geq\sqrt{2}$, $\mu\geq1$, $l\geq1$, we have $(1+l)^{\frac{\alpha}{\sqrt{2}}}-l^{\frac{\alpha}{\sqrt{2}}}=((1+k)^\frac{\alpha}{\sqrt{2}}-1)/{k^\frac{\alpha}{\sqrt{2}}}$
when $l=\frac{1}{k}$ with $0\textless k\leq1$. The first equality holds when $\mu=1$ and the second equality holds when $l=1$. For given $l$, the bigger the $\mu$ is, the tighter the inequality in Theorem \ref{gg} is. Hence, our new monogamy relation for EoF is better than the ones in \cite{25,27}.
\end{rmk}

\begin{eg}
Let us again consider the three-qubit state $|\phi\rangle_{ABC}$ defined in Example \ref{a} with $\lambda_0=\lambda_3=\lambda_4={1}/{\sqrt{5}}$, $\lambda_2=\sqrt{{2}/{5}}$ and $\lambda_1=0$. We have
\begin{eqnarray*}
E_{A|BC}&=&-(4/5)\log_2(4/5)-(1/5)\log_2(1/5)\approx0.721928,\\
E_{AB}&=&-((5+\sqrt{17})/10)\log_2((5+\sqrt{17})/10)-((5-\sqrt{17})/10)\log_2((5-\sqrt{17})/10)\approx0.428710,\\
E_{AC}&=&-((5+\sqrt{21})/10)\log_2((5+\sqrt{21})/10)-((5-\sqrt{21})/10)\log_2((5-\sqrt{21})/10)\approx0.250225.
\end{eqnarray*}
Thus,
\begin{eqnarray}
E_{AB}^\alpha+(2^\frac{\alpha}{\sqrt{2}}-1)E_{AC}^\alpha&=&(0.428710)^\alpha+(2^\frac{\alpha}{\sqrt{2}}-1)(0.250225)^\alpha,\\
E_{AB}^\alpha+(\frac{(1+k)^\frac{\alpha}{\sqrt{2}}-1}{k^\frac{\alpha}{\sqrt{2}}})E_{AC}^\alpha&=&(0.428710)^\alpha+(\frac{(1+k)^\frac{\alpha}{\sqrt{2}}-1}{k^\frac{\alpha}{\sqrt{2}}})(0.250225)^\alpha,\\
E_{AB}^\alpha+((\mu+l)^\frac{\alpha}{\sqrt{2}}-l^\frac{\alpha}{\sqrt{2}})E_{AC}^\alpha&=&(0.428710)^\alpha+((\mu+l)^\frac{\alpha}{\sqrt{2}}-l^\frac{\alpha}{\sqrt{2}})(0.250225)^\alpha.
\end{eqnarray}
We see that our result is better than the one in \cite{25,27}, see Figure 2.
\end{eg}
\begin{figure}[h]
\begin{center}
\includegraphics[width=8cm,clip]{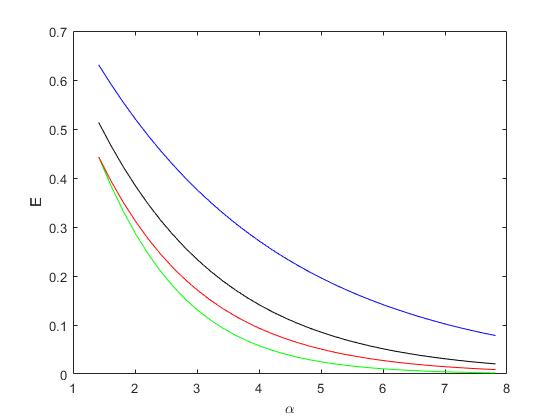}
\caption{From top to bottom, the first curve represents the EoF $E(|\phi\rangle_{A|BC})$, the third curve and the fourth curves represent the lower bounds
from \cite{27} and \cite{25}, respectively, the second curve represents the lower bound from our result.}	
\end{center}
\end{figure}

We can also provide tighter polygamy relations for the entanglement of assistance.

\begin{thm}
Let $0\textless\mu_r\leq1$ and $l_r\geq1$ be real numbers, $1\le r\le N-2$. For any $N$-qubit mixed state $\rho_{AB_1\cdots B_{N-1}}\in\mathcal{H}_A\otimes\mathcal{H}_{B_1}\otimes\cdots\otimes\mathcal{H}_{B_{N-1}}$, if $E_{aAB_i}\geq l_iE_{aA|B_{i+1}\cdots B_{N-1}}$, $E_{aA|B_i\cdots B_{N-1}}\leq E_{aAB_i}+\mu_iE_{aA|B_{i+1}\cdots B_{N-1}}$ for $i=1,2,\cdots,m$, and $E_{aA|B_{j+1}\cdots B_{N-1}}\geq l_jE_{aAB_j}$, $E_{aA|B_j\cdots B_{N-1}}\leq\mu_jE_{aAB_j}+E_{aA|B_{j+1}\cdots B_{N-1}}$ for $j=m+1,
\cdots,N-2$, $1\leq m\leq N-3,N\geq4$, we have
\begin{eqnarray}
E_{aA|B_1\cdots B_{N-1}}^\alpha&\leq&E_{aAB_1}^\alpha+\mathcal{K}_1E_{aAB_2}^\alpha+\cdots+\mathcal{K}_1\cdots\mathcal{K}_{m-1}E_{aAB_m}^\alpha\nonumber\\
& &+\mathcal{K}_1\cdots\mathcal{K}_{m}(\mathcal{K}_{m+1}E_{aAB_{m+1}}^\alpha+\cdots+\mathcal{K}_{N-2}E_{aAB_{N-2}}^\alpha)\nonumber\\
& &+\mathcal{K}_1\cdots\mathcal{K}_{m}E_{aAB_{N-1}}^\alpha
\end{eqnarray}
for all $0\leq\alpha\leq1$, where $\mathcal{K}_r=(\mu_r+l_r)^{\alpha}-l_r^{\alpha}$ with $1\le r\le N-2$.
\end{thm}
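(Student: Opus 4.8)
The plan is to run the same induction as in the proof of Theorem~\ref{ff} (and Lemma~\ref{ee}), but with every inequality reversed, using inequality~(\ref{f}) of Lemma~\ref{dd} in place of~(\ref{e}). No passage through the concurrence is needed here: the pairwise polygamy estimates that drive the argument are already built into the hypotheses (the conditions with $\mu_i\le 1$ strengthen the known relation $E_{aA|B_i\cdots B_{N-1}}\le E_{aAB_i}+E_{aA|B_{i+1}\cdots B_{N-1}}$). The first step is a single‑step polygamy estimate: if $0<\mu\le1$, $l\ge1$, $0\le\alpha\le1$, and a state satisfies $E_{aA|BC}\le E_{aAB}+\mu E_{aAC}$ and $E_{aAB}\ge l E_{aAC}$, then
\begin{equation*}
E_{aA|BC}^\alpha\le E_{aAB}^\alpha+\big((\mu+l)^\alpha-l^\alpha\big)E_{aAC}^\alpha .
\end{equation*}
When $E_{aAC}=0$ the first hypothesis forces $E_{aA|BC}\le E_{aAB}$ and the claim is trivial; otherwise set $t=E_{aAB}/E_{aAC}\ge l$. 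Since $x\mapsto x^\alpha$ is nondecreasing on $[0,\infty)$, $E_{aA|BC}^\alpha\le(E_{aAB}+\mu E_{aAC})^\alpha=E_{aAC}^\alpha(t+\mu)^\alpha$, and writing $(t+\mu)^\alpha=\mu^\alpha\big((1+t/\mu)^\alpha-(t/\mu)^\alpha\big)+t^\alpha$, inequality~(\ref{f}) with $x=t/\mu\ge y=l/\mu\ge0$ and $s=\alpha\in[0,1]$ bounds the bracket by $(1+l/\mu)^\alpha-(l/\mu)^\alpha$; multiplying by $E_{aAC}^\alpha$ gives the displayed bound with the constant $\mathcal{K}=(\mu+l)^\alpha-l^\alpha$.

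For the iteration, applying the single‑step estimate with $B=B_i$ and the block $B_{i+1}\cdots B_{N-1}$ in the role of $C$, the hypotheses for $i=1,\dots,m$ give $E_{aA|B_i\cdots B_{N-1}}^\alpha\le E_{aAB_i}^\alpha+\mathcal{K}_iE_{aA|B_{i+1}\cdots B_{N-1}}^\alpha$; chaining these from $i=1$ to $m$ yields
\begin{equation*}
E_{aA|B_1\cdots B_{N-1}}^\alpha\le E_{aAB_1}^\alpha+\mathcal{K}_1E_{aAB_2}^\alpha+\cdots+\mathcal{K}_1\cdots\mathcal{K}_{m-1}E_{aAB_m}^\alpha+\mathcal{K}_1\cdots\mathcal{K}_mE_{aA|B_{m+1}\cdots B_{N-1}}^\alpha .
\end{equation*}
For $j=m+1,\dots,N-2$ the two marginals swap roles: now $E_{aA|B_{j+1}\cdots B_{N-1}}\ge l_jE_{aAB_j}$ and $E_{aA|B_j\cdots B_{N-1}}\le\mu_jE_{aAB_j}+E_{aA|B_{j+1}\cdots B_{N-1}}$, so the single‑step estimate (with $B_j$ now playing the role of the smaller marginal) gives $E_{aA|B_j\cdots B_{N-1}}^\alpha\le E_{aA|B_{j+1}\cdots B_{N-1}}^\alpha+\mathcal{K}_jE_{aAB_j}^\alpha$; chaining from $j=m+1$ through $j=N-2$, and using $E_{aA|B_{N-1}}=E_{aAB_{N-1}}$, gives $E_{aA|B_{m+1}\cdots B_{N-1}}^\alpha\le\mathcal{K}_{m+1}E_{aAB_{m+1}}^\alpha+\cdots+\mathcal{K}_{N-2}E_{aAB_{N-2}}^\alpha+E_{aAB_{N-1}}^\alpha$. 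Substituting this into the previous display and distributing the factor $\mathcal{K}_1\cdots\mathcal{K}_m$ produces exactly the asserted inequality.

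I do not expect a genuine obstacle; the work is bookkeeping. The point requiring care is consistency of inequality senses: monotonicity of $x\mapsto x^\alpha$ preserves $\le$, and inequality~(\ref{f}) is the version that points the correct way precisely because $0\le\alpha\le1$ (this is exactly where the monogamy proof would instead use~(\ref{e})), while the "smaller'' and "larger'' marginals are interchanged when passing from the indices $i\le m$ to the indices $j>m$. The only genuine verifications are that $\mu_r>0$ makes the factorization legitimate, that $\mathcal{K}_r\ge0$ (indeed $>0$ for $\alpha>0$, by strict monotonicity of $x\mapsto x^\alpha$) so that each chained bound is valid, and that the degenerate cases of a vanishing partial entanglement of assistance are disposed of separately as in the first step.
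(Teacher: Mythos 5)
Your proposal is correct and follows essentially the route the paper intends: the paper states this polygamy theorem without an explicit proof as the analogue of Lemma \ref{ee} and Theorem \ref{ff}, and your argument is exactly that analogue, replacing inequality (\ref{e}) by inequality (\ref{f}) of Lemma \ref{dd} (valid since $0\leq\alpha\leq1$) and iterating directly on the entanglement-of-assistance quantities, with the roles of the two marginals swapped for the indices $j>m$. Your added care about $\mu_r>0$, $\mathcal{K}_r\geq0$, and the degenerate case of vanishing $E_{a}$ is consistent with, and slightly more explicit than, the paper's treatment.
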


Particularly, we have

\begin{thm}
Let $0\textless\mu_r\leq1$ and $l_r\geq1$ be real numbers, $1\le r\le N-2$. For any $N$-qubit mixed state $\rho_{AB_1\cdots B_{N-1}}\in\mathcal{H}_A\otimes\mathcal{H}_{B_1}\otimes\cdots\otimes\mathcal{H}_{B_{N-1}}$, if $E_{aAB_i}\geq l_iE_{aA|B_{i+1}\cdots B_{N-1}}$, $E_{aA|B_i\cdots B_{N-1}}\leq E_{aAB_i}+\mu_iE_{aA|B_{i+1}\cdots B_{N-1}}$ for $i=1,2,\cdots,N-2$, then
\begin{eqnarray}
E_{aA|B_1\cdots B_{N-1}}^\alpha\leq E_{aAB_1}^\alpha+\mathcal{K}_1E_{aAB_2}^\alpha+\cdots+\mathcal{K}_1\cdots\mathcal{K}_{N-2}E_{aAB_{N-1}}^\alpha
\end{eqnarray}
for all $0\leq\alpha\leq1$, where $\mathcal{K}_r=(\mu_r+l_r)^{\alpha}-l_r^{\alpha}$ with $1\le r\le N-2$.
\end{thm}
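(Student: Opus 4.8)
The plan is to reduce the statement to an iterated application of a single tripartite inequality, namely the assistance-version counterpart of Lemma~\ref{ee} valid in the range $0\le\alpha\le1$. First I would establish: if $\rho_{ABC}$ satisfies $E_{aAB}\ge lE_{aAC}$ for some $l\ge1$ and $E_{aA|BC}\le E_{aAB}+\mu E_{aAC}$ for some $0<\mu\le1$, then
\[
E_{aA|BC}^\alpha\le E_{aAB}^\alpha+\bigl((\mu+l)^\alpha-l^\alpha\bigr)E_{aAC}^\alpha
\]
for all $0\le\alpha\le1$. This is proved exactly as in Lemma~\ref{ee}, with the final monotonicity step reversed: since $t\mapsto t^\alpha$ is increasing, $E_{aA|BC}^\alpha\le(E_{aAB}+\mu E_{aAC})^\alpha$; then I rewrite $(E_{aAB}+\mu E_{aAC})^\alpha=\mu^\alpha E_{aAC}^\alpha\bigl[(1+x)^\alpha-x^\alpha\bigr]+E_{aAB}^\alpha$ with $x=\mu^{-1}\bigl(E_{aAB}/E_{aAC}\bigr)\ge\mu^{-1}l\ge l\ge1$; and finally apply inequality~(\ref{f}) of Lemma~\ref{dd} with $s=\alpha$ and $y=\mu^{-1}l$, which gives $(1+x)^\alpha-x^\alpha\le(1+\mu^{-1}l)^\alpha-(\mu^{-1}l)^\alpha$, so that $\mu^\alpha\bigl[(1+\mu^{-1}l)^\alpha-(\mu^{-1}l)^\alpha\bigr]=(\mu+l)^\alpha-l^\alpha$. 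The degenerate case $E_{aAC}=0$ is handled separately: then $E_{aA|BC}\le E_{aAB}$, and monotonicity of $t\mapsto t^\alpha$ gives the claim with the second term equal to zero.

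With this tripartite bound in hand, I apply it along the chain $A|B_1B_2\cdots B_{N-1}$. Taking the roles $B\leftrightarrow B_1$ and $C\leftrightarrow B_2\cdots B_{N-1}$, the hypotheses $E_{aAB_1}\ge l_1E_{aA|B_2\cdots B_{N-1}}$ and $E_{aA|B_1\cdots B_{N-1}}\le E_{aAB_1}+\mu_1E_{aA|B_2\cdots B_{N-1}}$ yield
\[
E_{aA|B_1\cdots B_{N-1}}^\alpha\le E_{aAB_1}^\alpha+\mathcal{K}_1E_{aA|B_2\cdots B_{N-1}}^\alpha .
\]
Iterating — with $B\leftrightarrow B_i$ and $C\leftrightarrow B_{i+1}\cdots B_{N-1}$ at the $i$-th stage, $i=1,\dots,N-2$ — and telescoping the products of the factors $\mathcal{K}_r=(\mu_r+l_r)^\alpha-l_r^\alpha$, I obtain
\[
E_{aA|B_1\cdots B_{N-1}}^\alpha\le E_{aAB_1}^\alpha+\mathcal{K}_1E_{aAB_2}^\alpha+\cdots+\mathcal{K}_1\cdots\mathcal{K}_{N-2}E_{aAB_{N-1}}^\alpha,
\]
which is the asserted inequality. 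This is exactly the telescoping used in Theorem~\ref{ff}, with all inequalities reversed and the recursion run through all $N-2$ stages; equivalently it is the ``$m=N-2$'' specialization of the preceding theorem, for which the intermediate $j$-block is empty.

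Since the argument is a direct transcription of the concurrence/EoF case, I expect no serious obstacle; the one point requiring care is the \emph{direction} of the monotonicity step. Unlike Lemma~\ref{ee}, where one uses that $(1+x)^t-x^t$ is increasing in $x$ for $t\ge1$, here one needs that $(1+x)^\alpha-x^\alpha$ is \emph{decreasing} in $x$ for $0\le\alpha\le1$, i.e. precisely inequality~(\ref{f}); this is what converts the estimate into an upper bound, and it is also why the hypothesis is $E_{aAB_i}\ge l_iE_{aA|B_{i+1}\cdots B_{N-1}}$, which places the relevant argument $x$ in the range $x\ge\mu^{-1}l\ge l\ge1$ where the bound at the point $\mu^{-1}l$ is the one we want. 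One should also check at each stage that both ingredients needed for the tripartite bound — the ordering $E_{aAB_i}\ge l_iE_{aA|B_{i+1}\cdots B_{N-1}}$ and the $\mu_i$-polygamy estimate $E_{aA|B_i\cdots B_{N-1}}\le E_{aAB_i}+\mu_iE_{aA|B_{i+1}\cdots B_{N-1}}$ — are among the stated assumptions, which they are for all $i=1,\dots,N-2$.
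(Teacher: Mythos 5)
Your proposal is correct, and it is exactly the argument the paper intends: the paper states this polygamy theorem without writing out a proof, relying implicitly on the same reversed version of Lemma~\ref{dd} (inequality~(\ref{f}) with $s=\alpha\in[0,1]$) together with the telescoping iteration used for Theorem~\ref{ff}, which is precisely what you carry out, including the correct handling of the direction of the monotonicity step and the degenerate case. No gaps; your derivation of the tripartite bound and its iteration over $i=1,\dots,N-2$ match the paper's approach.
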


\section{\bf Tighter monogamy relations for negativity}
Another well-known quantifier of bipartite entanglement is the negativity, which is based on the positive partial transposition (PPT) criterion.
For a bipartite state $\rho_{AB}$ in $\mathcal{H}_A\otimes\mathcal{H}_B$ the negativity is given by
$N(\rho_{AB})=(\left\|\rho_{AB}^{T_A}\right\|-1)/2$, where $\rho_{AB}^{T_A}$ is the partial transpose with respect to the subsystem $A$, and $\left\|X\right\|$ denotes the trace norm of X, i.e., $\left\|X\right\|=\sqrt{XX^{\dag}}$ \cite{42}. For the purposes of discussion, we use the definition of negativity as $\left\|\rho_{AB}^{T_A}\right\|-1$. For a bipartite mixed state $\rho_{AB}$, the convex-roof extended negativity (CREN) is defined by
\begin{equation}
\widetilde{N}(\rho_{AB})=\min\limits_{\{p_i,|\phi_i\rangle\}}\sum\limits_{i}p_iN(|\phi_i\rangle),
\end{equation}
where the minimum is taken over all possible pure state decompositions $\{p_i,|\phi_i\rangle\}$ of $\rho_{AB}$.

For any bipartite pure state $|\phi\rangle_{AB}$, the negativity is given by $N(|\phi\rangle_{AB})=2\sum\limits_{i<j}\sqrt{\lambda_i\lambda_j}=(\Tr(\sqrt{\rho_A}))^2-1$, where $\lambda_i$ are the the eigenvalues of the reduced density matrix of $|\phi\rangle_{AB}$. For any bipartite pure state $|\phi\rangle_{AB}$ in $d\otimes d$ with Schmidt rank two, $|\phi\rangle_{AB}=\sqrt{\lambda_0}|00\rangle+\sqrt{\lambda_1}|11\rangle$, one has
\begin{equation}
N(|\phi\rangle_{AB})=\left\||\phi\rangle\langle\phi|^{T_B}\right\|-1=2\sqrt{\lambda_0\lambda_1}=\sqrt{2[1-\Tr(\rho_A^2)]}=C(|\phi\rangle_{AB}).
\end{equation}
In other words, the negativity is equivalent to the concurrence for any pure state with Schmidt rank two. Consequently it follows that for any two-qubit mixed state $\rho_{AB}=\sum\limits_{i}p_i|\phi_i\rangle\langle\phi_i|$,
\begin{eqnarray}
\widetilde{N}(\rho_{AB})&=&\min\limits_{\{p_i,|\phi_i\rangle\}}\sum\limits_{i}p_iN(|\phi_i\rangle)\nonumber\\
&=&\min\limits_{\{p_i,|\phi_i\rangle\}}\sum\limits_{i}p_iC(|\phi_i\rangle)\nonumber\\
&=&C(\rho_{AB}).
\end{eqnarray}

Recently, the monogamy relations satisfied by the $\alpha$th ($\alpha\geq2$) power of negativity for $N$-qubit systems have been studied \cite{25}. If $\widetilde{N}(\rho_{AB_i})\geq\widetilde{N}(\rho_{A|B_{i+1}\cdots B_{N-1}})$ for $i=1,2,\cdots,m$, and $\widetilde{N}(\rho_{AB_j})\leq\widetilde{N}(\rho_{A|B_{j+1}\cdots B_{N-1}})$ for $j=m+1,\cdots,N-2$, $1\leq m\leq N-3$, $N\geq4$, one has
\begin{eqnarray}
\widetilde{N}^{\alpha}(\rho_{A|B_1 \cdots B_{N-1}})&\geq& \widetilde{N}^{\alpha}(\rho_{AB_1})+(2^{\frac{\alpha}{2}}-1)\widetilde{N}^{\alpha}(\rho_{AB_2})+\cdots+(2^{\frac{\alpha}{2}}-1)^{m-1}\widetilde{N}^{\alpha}(\rho_{AB_m})\nonumber\\
& &+(2^{\frac{\alpha}{2}}-1)^{m+1}(\widetilde{N}^{\alpha}(\rho_{AB_{m+1}})+\cdots+\widetilde{N}^{\alpha}(\rho_{AB_{N-2}}))\nonumber\\
& &+(2^{\frac{\alpha}{2}}-1)^m\widetilde{N}^{\alpha}(\rho_{AB_{N-1}}).
\end{eqnarray}
This relation is further improved to be
\begin{eqnarray}
\widetilde{N}^{\alpha}(\rho_{A|B_1 \cdots B_{N-1}})&\geq&\widetilde{N}^{\alpha}(\rho_{AB_1})+(((1+k)^\frac{\alpha}{2}-1)/k^\frac{\alpha}{2})\widetilde{N}^{\alpha}(\rho_{AB_2})+\cdots\nonumber\\
& &+(((1+k)^\frac{\alpha}{2}-1)/k^\frac{\alpha}{2})^{m-1}\widetilde{N}^{\alpha}(\rho_{AB_m})\nonumber\\
& &+(((1+k)^\frac{\alpha}{2}-1)/k^\frac{\alpha}{2})^{m+1}(\widetilde{N}^{\alpha}(\rho_{AB_{m+1}})+\cdots+\widetilde{N}^{\alpha}(\rho_{AB_{N-2}}))\nonumber\\
& &+(((1+k)^\frac{\alpha}{2}-1)/k^\frac{\alpha}{2})^m\widetilde{N}^{\alpha}(\rho_{AB_{N-1}}),
\end{eqnarray}
with $k\widetilde{N}^2(\rho_{AB_i})\geq\widetilde{N}^2(\rho_{A|B_{i+1}\cdots B_{N-1}})$ for $i=1,2,\cdots,m$, and $\widetilde{N}^2(\rho_{AB_j})\leq k\widetilde{N}^2(\rho_{A|B_{j+1}\cdots B_{N-1}})$ for $j=m+1,\cdots,N-2$, $1\leq m\leq N-3$, $N\geq4$ and $0<k\leq1$ \cite{27} .

Similar to the consideration of concurrence, we have the following result. For convenience, we denote $\widetilde{N}_{AB_j}=\widetilde{N}(|\rho\rangle_{AB_j})$ for $j=1,2,\cdots,N-1$, and $\widetilde{N}_{A|B_1B_2\cdots B_{N-1}}=\widetilde{N}(\rho_{A|B_1 B_2\cdots B_{N-1}})$.

\begin{thm}
Let $\mu_r\geq1$ and $l_r\geq1$ ($1\le r\le N-2$) be real numbers.
For any $N$-qubit mixed state $\rho_{AB_1\cdots B_{N-1}}\in\mathcal{H}_A\otimes\mathcal{H}_{B_1}\otimes\cdots\otimes\mathcal{H}_{B_{N-1}}$, if $\widetilde{N}_{AB_i}^2\geq l_i\widetilde{N}_{A|B_{i+1}\cdots B_{N-1}}^2$, $\widetilde{N}_{A|B_i\cdots B_{N-1}}^2\geq \widetilde{N}_{AB_i}^2+\mu_i\widetilde{N}_{A|B_{i+1}\cdots B_{N-1}}^2$ for $i=1,2,\cdots,m$, and $\widetilde{N}_{A|B_{j+1}\cdots B_{N-1}}^2\geq l_j\widetilde{N}_{AB_j}^2$, $\widetilde{N}_{A|B_j\cdots B_{N-1}}^2\geq\mu_j\widetilde{N}_{AB_j}^2+\widetilde{N}_{A|B_{j+1}\cdots B_{N-1}}^2$ for $j=m+1,
\cdots,N-2$, $1\leq m\leq N-3,N\geq4$, then
\begin{eqnarray}
\widetilde{N}_{A|B_1\cdots B_{N-1}}^\alpha&\geq&\widetilde{N}_{AB_1}^\alpha+\mathcal{K}_1\widetilde{N}_{AB_2}^\alpha+\cdots+\mathcal{K}_1\cdots\mathcal{K}_{m-1}\widetilde{N}_{AB_m}^\alpha\nonumber\\
& &+\mathcal{K}_1\cdots\mathcal{K}_{m}(\mathcal{K}_{m+1}\widetilde{N}_{AB_{m+1}}^\alpha+\cdots+\mathcal{K}_{N-2}\widetilde{N}_{AB_{N-2}}^\alpha)\nonumber\\
& &+\mathcal{K}_1\cdots\mathcal{K}_{m}\widetilde{N}_{AB_{N-1}}^\alpha
\end{eqnarray}
for all $\alpha\geq2$, where $\mathcal{K}_r=(\mu_r+l_r)^{\frac{\alpha}{2}}-l_r^{\frac{\alpha}{2}}$ with $1\le r\le N-2$.
\end{thm}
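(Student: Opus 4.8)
The plan is to reproduce, line for line, the argument that gave Theorem~\ref{ff} for concurrence, using the fact that CREN coincides with concurrence on two-qubit states while the squared CREN is itself monogamous on multiqubit systems. The starting point is the known monogamy relation $\widetilde{N}_{A|BC}^2\geq\widetilde{N}_{AB}^2+\widetilde{N}_{AC}^2$ for any $2\otimes2\otimes2^{n-2}$ mixed state $\rho_{ABC}$ (the CREN analogue of~(\ref{b}), cf.~\cite{12,13,14,25}); consequently, exactly as in~(\ref{mu}), there exists $\mu\geq1$ with $\widetilde{N}_{A|BC}^2\geq\widetilde{N}_{AB}^2+\mu\,\widetilde{N}_{AC}^2$.

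Next I would establish the CREN counterpart of Lemma~\ref{ee}: for $l\geq1$, if $\widetilde{N}_{AB}^2\geq l\,\widetilde{N}_{AC}^2$ then $\widetilde{N}_{A|BC}^\alpha\geq\widetilde{N}_{AB}^\alpha+((\mu+l)^{\frac{\alpha}{2}}-l^{\frac{\alpha}{2}})\widetilde{N}_{AC}^\alpha$ for $\alpha\geq2$. The computation is verbatim that in the proof of Lemma~\ref{ee}: raise the monogamy inequality to the power $\frac{\alpha}{2}$, factor out $\mu^{\frac{\alpha}{2}}\widetilde{N}_{AC}^\alpha$, write the bracketed term as $g\big(\mu^{-1}\widetilde{N}_{AB}^2/\widetilde{N}_{AC}^2,\tfrac{\alpha}{2}\big)$ with $g(x,t)=(1+x)^t-x^t$, and invoke the monotonicity of $g$ in $x$ (Lemma~\ref{dd}) together with $\widetilde{N}_{AB}^2/\widetilde{N}_{AC}^2\geq l$. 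The degenerate case $\widetilde{N}_{AB}=0$ forces $\widetilde{N}_{AC}=0$ and the bound is trivially zero.

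Then the theorem follows by the same two-stage iteration used for Theorem~\ref{ff}. Applying the CREN lemma along the chain $A|B_1\cdots B_{N-1}\to A|B_2\cdots B_{N-1}\to\cdots\to A|B_{m+1}\cdots B_{N-1}$, with the hypotheses $\widetilde{N}_{AB_i}^2\geq l_i\widetilde{N}_{A|B_{i+1}\cdots B_{N-1}}^2$ and $\widetilde{N}_{A|B_i\cdots B_{N-1}}^2\geq\widetilde{N}_{AB_i}^2+\mu_i\widetilde{N}_{A|B_{i+1}\cdots B_{N-1}}^2$ for $i=1,\dots,m$, peels off $\widetilde{N}_{AB_1}^\alpha,\mathcal{K}_1\widetilde{N}_{AB_2}^\alpha,\dots,\mathcal{K}_1\cdots\mathcal{K}_{m-1}\widetilde{N}_{AB_m}^\alpha$ and leaves $\mathcal{K}_1\cdots\mathcal{K}_m\,\widetilde{N}_{A|B_{m+1}\cdots B_{N-1}}^\alpha$ (this is the CREN version of~(\ref{x})). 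For $j=m+1,\dots,N-2$ the roles of $\widetilde{N}_{AB_j}$ and $\widetilde{N}_{A|B_{j+1}\cdots B_{N-1}}$ are interchanged, so I would apply the lemma with $\widetilde{N}_{AB_j}$ as the smaller quantity, using $\widetilde{N}_{A|B_{j+1}\cdots B_{N-1}}^2\geq l_j\widetilde{N}_{AB_j}^2$ and $\widetilde{N}_{A|B_j\cdots B_{N-1}}^2\geq\mu_j\widetilde{N}_{AB_j}^2+\widetilde{N}_{A|B_{j+1}\cdots B_{N-1}}^2$, to obtain $\widetilde{N}_{A|B_{m+1}\cdots B_{N-1}}^\alpha\geq\mathcal{K}_{m+1}\widetilde{N}_{AB_{m+1}}^\alpha+\cdots+\mathcal{K}_{N-2}\widetilde{N}_{AB_{N-2}}^\alpha+\widetilde{N}_{AB_{N-1}}^\alpha$ (the CREN version of~(\ref{y})). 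Substituting the second estimate into the first yields the claimed inequality.

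I do not expect a genuine obstacle: the structure is identical to the concurrence case, and the only negativity-specific inputs are the multiqubit monogamy of $\widetilde{N}^2$ and the identity $\widetilde{N}(\rho_{AB})=C(\rho_{AB})$ for two-qubit $\rho_{AB}$, both already recorded in this section. The single point requiring a word of care is that $\widetilde{N}_{A|B_1\cdots B_{N-1}}$ refers to a genuinely bipartite, possibly high-dimensional cut, so one must cite the CREN monogamy result for multiqubit states rather than anything about two-qubit concurrence; once that is granted, the power-function manipulations and the bookkeeping of the products $\mathcal{K}_1\cdots\mathcal{K}_m$ are routine.
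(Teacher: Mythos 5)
Your proposal is correct and is precisely what the paper intends: the paper gives no separate proof for CREN, stating only that the result follows "similar to the consideration of concurrence," i.e.\ exactly your route of combining $\widetilde{N}(\rho_{AB})=C(\rho_{AB})$ for two-qubit states with the monogamy of squared CREN, then repeating the Lemma~\ref{ee}/Theorem~\ref{ff} iteration with the hypothesized conditions. No gaps; the argument matches the paper's approach essentially verbatim.
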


In particular, we have

\begin{thm}
Let $\mu_r\geq1$ and $l_r\geq1$ ($1\le r\le N-2$) be real numbers.
For any $N$-qubit mixed state $\rho_{AB_1\cdots B_{N-1}}\in\mathcal{H}_A\otimes\mathcal{H}_{B_1}\otimes\cdots\otimes\mathcal{H}_{B_{N-1}}$, if $\widetilde{N}_{AB_i}^2\geq l_i\widetilde{N}_{A|B_{i+1}\cdots B_{N-1}}^2$, $\widetilde{N}_{A|B_i\cdots B_{N-1}}^2\geq \widetilde{N}_{AB_i}^2+\mu_i\widetilde{N}_{A|B_{i+1}\cdots B_{N-1}}^2$ for all $i=1,2,\cdots,N-2$, we have
\begin{eqnarray}
\widetilde{N}_{A|B_1\cdots B_{N-1}}^\alpha\geq\widetilde{N}_{AB_1}^\alpha+\mathcal{K}_1\widetilde{N}_{AB_2}^\alpha+\cdots
+\mathcal{K}_1\cdots\mathcal{K}_{N-2}\widetilde{N}_{AB_{N-1}}^\alpha
\end{eqnarray}
for all $\alpha\geq2$, where $\mathcal{K}_r=(\mu_r+l_r)^{\frac{\alpha}{2}}-l_r^{\frac{\alpha}{2}}$ with $1\le r\le N-2$.
\end{thm}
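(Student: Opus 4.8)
The plan is to derive this as a direct specialization of the preceding theorem (the one with a general splitting index $m$), since the statement here is exactly that theorem in the case $m = N-2$, where the ``second block'' of indices $j = m+1, \dots, N-2$ is empty. First I would note that for any $2\otimes 2\otimes 2^{n-2}$ mixed state the CREN satisfies $\widetilde{N}_{A|BC}^2 \geq \widetilde{N}_{AB}^2 + \widetilde{N}_{AC}^2$ (this follows from the concurrence monogamy relation (\ref{b}) together with the identity $\widetilde{N}(\rho_{AB}) = C(\rho_{AB})$ for two-qubit states established above), hence there is $\mu \geq 1$ with $\widetilde{N}_{A|BC}^2 \geq \widetilde{N}_{AB}^2 + \mu \widetilde{N}_{AC}^2$. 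Then, exactly as in Lemma \ref{ee}, under the extra hypothesis $\widetilde{N}_{AB}^2 \geq l\,\widetilde{N}_{AC}^2$ with $l\geq 1$ one gets
\begin{equation*}
\widetilde{N}_{A|BC}^\alpha \geq \widetilde{N}_{AB}^\alpha + \big((\mu+l)^{\frac{\alpha}{2}} - l^{\frac{\alpha}{2}}\big)\widetilde{N}_{AC}^\alpha
\end{equation*}
for all $\alpha \geq 2$, the key analytic input being Lemma \ref{dd} applied to $g(x,t) = (1+x)^t - x^t$ with $t = \alpha/2 \geq 1$.

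Next I would iterate this CREN version of Lemma \ref{ee} along the chain $A|B_1B_2\cdots B_{N-1}$. Writing $\widetilde{N}_{A|B_1\cdots B_{N-1}}^2 \geq \widetilde{N}_{AB_1}^2 + \mu_1 \widetilde{N}_{A|B_2\cdots B_{N-1}}^2$ and using $\widetilde{N}_{AB_1}^2 \geq l_1 \widetilde{N}_{A|B_2\cdots B_{N-1}}^2$, one peels off the first term and acquires the factor $\mathcal{K}_1 = (\mu_1+l_1)^{\alpha/2} - l_1^{\alpha/2}$ in front of $\widetilde{N}_{A|B_2\cdots B_{N-1}}^\alpha$. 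Repeating with $i = 2, 3, \dots, N-2$ and using the hypotheses $\widetilde{N}_{AB_i}^2 \geq l_i \widetilde{N}_{A|B_{i+1}\cdots B_{N-1}}^2$ and $\widetilde{N}_{A|B_i\cdots B_{N-1}}^2 \geq \widetilde{N}_{AB_i}^2 + \mu_i \widetilde{N}_{A|B_{i+1}\cdots B_{N-1}}^2$ at each step, the accumulated product $\mathcal{K}_1\cdots\mathcal{K}_{i-1}$ multiplies the extracted term $\widetilde{N}_{AB_i}^\alpha$; after $N-2$ steps the remaining tail term is $\mathcal{K}_1\cdots\mathcal{K}_{N-2}\widetilde{N}_{A|B_{N-1}}^\alpha = \mathcal{K}_1\cdots\mathcal{K}_{N-2}\widetilde{N}_{AB_{N-1}}^\alpha$, since $\widetilde{N}_{A|B_{N-1}} = \widetilde{N}_{AB_{N-1}}$. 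This is precisely the telescoping argument used in (\ref{x}) in the proof of Theorem \ref{ff}, so I would simply invoke it rather than rewrite it.

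Finally, I would remark that this is the $m = N-2$ instance of the previous theorem, so formally it needs no separate proof; but if a self-contained argument is wanted, the two ingredients above — the two-qubit-reduced CREN monogamy inequality $\widetilde{N}_{A|BC}^2 \geq \widetilde{N}_{AB}^2 + \widetilde{N}_{AC}^2$ plus $\widetilde{N} = C$ on two qubits, and the convexity estimate of Lemma \ref{dd} — suffice. The main point to be careful about is the validity of the base inequality $\widetilde{N}_{A|BC}^2 \geq \widetilde{N}_{AB}^2 + \widetilde{N}_{AC}^2$ when $C$ is a subsystem of dimension $2^{n-2}$: this relies on the fact, recorded in \cite{25}, that CREN coincides with concurrence on the two-qubit reductions $\rho_{AB_i}$ while the bipartite cut $A|B_1\cdots B_{N-1}$ behaves monogamously after purification, so that the squared-CREN monogamy holds for multiqubit states; everything else is an entirely routine iteration and causes no difficulty.
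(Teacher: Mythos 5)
Your proposal is correct and takes essentially the same route as the paper, which gives no separate argument for this theorem but presents it as following "similar to the consideration of concurrence": raise the hypothesized inequality $\widetilde{N}_{A|B_i\cdots B_{N-1}}^2\geq\widetilde{N}_{AB_i}^2+\mu_i\widetilde{N}_{A|B_{i+1}\cdots B_{N-1}}^2$ to the power $\alpha/2$, apply Lemma \ref{dd} as in Lemma \ref{ee} using $\widetilde{N}_{AB_i}^2\geq l_i\widetilde{N}_{A|B_{i+1}\cdots B_{N-1}}^2$ and $\widetilde{N}=C$ on two-qubit states, and telescope exactly as in (\ref{x}). The only minor point is that the preceding CREN theorem is stated with $1\le m\le N-3$, so this result is not literally its $m=N-2$ instance, but your self-contained iteration covers all $i=1,\dots,N-2$ directly, which is precisely how the paper treats it.
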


\begin{eg}
Let us consider the state in Example \ref{a} with $\lambda_0=\lambda_3=\lambda_4={1}/{\sqrt{5}}$ $\lambda_2=\sqrt{2/5}$ and $\lambda_1=0$. We have $\widetilde{N}_{A|BC}={4}/{5}$, $\widetilde{N}_{AB}={2\sqrt{2}}/{5}$ and $\widetilde{N}_{AC}={2}/{5}$. Therefore,
\begin{eqnarray}
\widetilde{N}_{AB}^\alpha+(2^\frac{\alpha}{2}-1)\widetilde{N}_{AC}^\alpha&=&({2\sqrt{2}}/{5})^\alpha+(2^\frac{\alpha}{2}-1)({2}/{5})^\alpha,\\
\widetilde{N}_{AB}^\alpha+(((1+k)^\frac{\alpha}{2}-1)/k^\frac{\alpha}{2})\widetilde{N}_{AC}^\alpha&=&({2\sqrt{2}}/{5})^\alpha+(((1+k)^\frac{\alpha}{2}-1)/k^\frac{\alpha}{2})({2}/{5})^\alpha,\\
\widetilde{N}_{AB}^\alpha+((\mu+l)^\frac{\alpha}{2}-l^\frac{\alpha}{2})\widetilde{N}_{AC}^\alpha&=&({2\sqrt{2}}/{5})^\alpha+((\mu+l)^\frac{\alpha}{2}-l^\frac{\alpha}{2})({2}/{5})^\alpha.
\end{eqnarray}
Our result is better than the one given in \cite{25,27} for $\alpha\geq2$, see Figure 3.
\end{eg}
\begin{figure}[h]
\begin{center}
\includegraphics[width=8cm,clip]{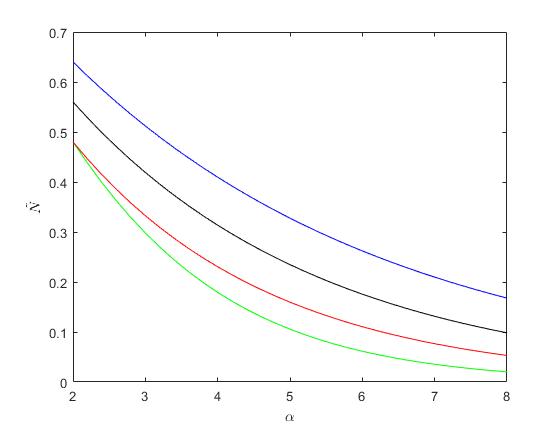}
\caption{From top to bottom, the first curve represents the negativity $\widetilde{N}(|\phi\rangle_{A|BC})$, the third and fourth curves represent the lower bounds from \cite{27} and \cite{25}, respectively, the second curve represents the lower bound from our result.}	
\end{center}
\end{figure}

\section{\bf Tighter monogamy and polygamy relations for Tsallis-q entanglement}
The Tsallis-q entanglement of a bipartite pure state $|\phi\rangle_{AB}$ is given by
\begin{equation}
T_q(|\phi\rangle_{AB})=S_q(\rho_A)=\frac{1}{q-1}(1-\Tr(\rho_A^q)),
\end{equation}
where $q\textgreater0$ and $q\not=1$ \cite{17}. $T_q(\rho)$ converges to the von Neumann entropy when $q$ tends to 1, $\lim\limits_{q\to1}T_q(\rho)=-\Tr\rho\log_2\rho=S(\rho)$. For a bipartite mixed state $\rho_{AB}$, the Tsallis-$q$ entanglement is defined as $T_q(\rho_{AB})=\min\limits_{\{p_i,|\phi_i\rangle\}}\sum\limits_ip_iT_q(|\phi_i\rangle)$, with the minimum taken over all possible pure state decompositions of $\rho_{AB}$. Yuan et al. presented an analytic relationship between the Tsallis-$q$ entanglement and concurrence for $\frac{5-\sqrt{13}}{2}\leq q\leq\frac{5+\sqrt{13}}{2}$,
\begin{equation}\label{p}
T_q(|\phi\rangle_{AB})=g_q(C^2(|\phi\rangle_{AB})),
\end{equation}
where $g_q(x)$ is defined as
\begin{equation}\label{q}
g_q(x)=\frac{1}{q-1}[1-(\frac{1+\sqrt{1-x}}{2})^q-(\frac{1-\sqrt{1-x}}{2})^q],
\end{equation}
with $0\leq x\leq1$ \cite{43}.
For a $2\otimes m$ pure state $|\phi\rangle$, it has been showed that $T_q(|\phi\rangle)=g_q(C^2(|\phi\rangle))$, and if $\rho$ is a two-qubit mixed state, then $T_q(\rho)=g_q(C^2(\rho))$ \cite{17}. Therefore, (\ref{p}) holds for any q such that $g_q(x)$ in (\ref{q}) is monotonically increasing and convex. Moreover, we have
$g_q(x^2+y^2)\geq g_q(x^2)+g_q(y^2)$ with $2\leq q\leq3$.

The Tsallis-$q$ entanglement satisfies the following relation,
\begin{equation}
T_{qA|B_1B_2\cdots B_{N-1}}\geq\sum\limits_{i=1}^{N-1}T_{qAB_i},
\end{equation}
where $i=1,2,\cdots,N-1$, $2\leq q\leq3$ \cite{17}. It is further proved in \cite{43} that
\begin{equation}
T^2_{qA|B_1B_2\cdots B_{N-1}}\geq\sum\limits_{i=1}^{N-1}T^2_{qAB_i},
\end{equation}
with $\frac{5-\sqrt{13}}{2}\leq q\leq\frac{5+\sqrt{13}}{2}$.

Recently, it has been proven that, for $N$-qubit mixed systems,
\begin{eqnarray}\label{r}
T_q^{\alpha}(\rho_{A|B_1 \cdots B_{N-1}})&\geq&T_q^{\alpha}(\rho_{AB_1})+(2^\alpha-1)T_q^{\alpha}(\rho_{AB_2})+\cdots+(2^\alpha-1)^{m-1}T_q^{\alpha}(\rho_{AB_m})\nonumber\\
& &+(2^\alpha-1)^{m+1}(T_q^{\alpha}(\rho_{AB_{m+1}})+\cdots+T_q^{\alpha}(\rho_{AB_{N-2}}))\nonumber\\
& &+(2^\alpha-1)^mT_q^{\alpha}(\rho_{AB_{N-1}}),
\end{eqnarray}
where $\alpha\geq1$, $2\leq q\leq3$, under the conditions that $C(\rho_{AB_i})\geq C(\rho_{A|B_{i+1}\cdots B_{N-1}})$ for $i=1,2,\cdots,m$, and $C(\rho_{AB_j})\leq C(\rho_{A|B_{j+1}\cdots B_{N-1}})$ for $j=m+1,\cdots,N-2$, $1\leq m\leq N-3$ and $N\geq4$ \cite{25}.
Later, the inequality (\ref{r}) is further improved as
\begin{eqnarray}
T_q^{\alpha}(\rho_{A|B_1 \cdots B_{N-1}})&\geq&T_q^{\alpha}(\rho_{AB_1})+(((1+k)^\alpha-1)/k^\alpha)T_q^{\alpha}(\rho_{AB_2})+\cdots\nonumber\\
& &+(((1+k)^\alpha-1)/k^\alpha)^{m-1}T_q^{\alpha}(\rho_{AB_m})\nonumber\\
& &+(((1+k)^\alpha-1)/k^\alpha)^{m+1}(T_q^{\alpha}(\rho_{AB_{m+1}})+\cdots+T_q^{\alpha}(\rho_{AB_{N-2}}))\nonumber\\
& &+(((1+k)^\alpha-1)/k^\alpha)^mT_q^{\alpha}(\rho_{AB_{N-1}}),
\end{eqnarray}
where $\alpha\geq1$, $2\leq q\leq3$, under the conditions that $kT_q(\rho_{AB_i})\geq T_q(\rho_{A|B_{i+1}\cdots B_{N-1}})$ for $i=1,2,\cdots,m$, and $T_q(\rho_{AB_j})\leq kT_q(\rho_{A|B_{j+1}\cdots B_{N-1}})$ for $j=m+1,\cdots,N-2$, $1\leq m\leq N-3$, $N\geq4$ and $0\textless k\leq1$ \cite{27}.

As a dual quantity to the Tsallis-$q$ entanglement, the Tsallis-$q$ entanglement of assistance (TEoA) is defined by $T_aq(\rho_{AB})=\max\limits_{\{p_i,|\phi_i\rangle\}}\sum\limits_ip_iT_q(|\phi_i\rangle)$, where the maximum is taken over all possible pure state decompositions
of $\rho_{AB}$ \cite{14}. If $1\leq q\leq2$ or $3\leq q\leq4$, the function $g_q$ defined in (\ref{q}) satisfies
\begin{equation}
g_q(x^2+y^2)\leq g_q(x^2)+g_q(y^2),
\end{equation}
which leads to the Tsallis polygamy inequality for any multi-qubit state $\rho_{AB_1B_2\cdots B_{N-1}}$ \cite{29},
\begin{equation}
T_{aqA|B_1B_2\cdots B_{N-1}}\leq\sum\limits_{i=1}^{N-1}T_{aqAB_i}.
\end{equation}

Taking a similar consideration to concurrence, we have the tighter monogamy and polygamy relations related to the Tsallis-$q$
entanglement as following. For convenience, we denote by $T_{qAB_j}=T_q(\rho_{AB_j})$ for $j=1,2,\cdots,N-1$, and $T_{qA|B_1B_2\cdots B_{N-1}}=T_q(\rho_{A|B_1 B_2\cdots B_{N-1}})$.

\begin{thm}
Let $\mu_r\geq1$ and $l_r\geq1$ ($1\le r\le N-2$) be real numbers. For any $N$-qubit mixed state $\rho_{AB_1\cdots B_{N-1}}\in\mathcal{H}_A\otimes\mathcal{H}_{B_1}\otimes\cdots\otimes\mathcal{H}_{B_{N-1}}$, if $T_{qAB_i}\geq l_iT_{qA|B_{i+1}\cdots B_{N-1}}$, $T_{qA|B_i\cdots B_{N-1}}\geq T_{qAB_i}+\mu_iT_{qA|B_{i+1}\cdots B_{N-1}}$ for $i=1,2,\cdots,m$, and $T_{qA|B_{j+1}\cdots B_{N-1}}\geq l_jT_{qAB_j}$, $T_{qA|B_j\cdots B_{N-1}}\geq\mu_jT_{qAB_j}+T_{qA|B_{j+1}\cdots B_{N-1}}$ for $j=m+1,
\cdots,N-2$, $1\leq m\leq N-3,N\geq4$, then
\begin{eqnarray}
T_{qA|B_1\cdots B_{N-1}}^\alpha&\geq&T_{qAB_1}^\alpha+\mathcal{K}_1T_{qAB_2}^\alpha+\cdots+\mathcal{K}_1\cdots\mathcal{K}_{m-1}T_{qAB_m}^\alpha\nonumber\\
& &+\mathcal{K}_1\cdots\mathcal{K}_{m}(\mathcal{K}_{m+1}T_{qAB_{m+1}}^\alpha+\cdots+\mathcal{K}_{N-2}T_{qAB_{N-2}}^\alpha)\nonumber\\
& &+\mathcal{K}_1\cdots\mathcal{K}_{m}T_{qAB_{N-1}}^\alpha
\end{eqnarray}
for all $\alpha\geq1$ and $2\leq q\leq3$, where $\mathcal{K}_r=(\mu_r+l_r)^\alpha-l_r^\alpha$ with $1\le r\le N-2$.
\end{thm}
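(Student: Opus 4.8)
The plan is to mirror exactly the structure of the proof of Theorem~\ref{ff}, replacing the concurrence $C$ by the Tsallis-$q$ entanglement $T_q$ and the exponent $\alpha/2$ by $\alpha$ throughout. The underlying analytic engine is Lemma~\ref{dd}: with $t = \alpha \ge 1$ (here we genuinely use $\alpha\ge 1$, not $\alpha\ge 2$, since we are raising a single power $\alpha$ rather than squaring), inequality~(\ref{e}) gives the key monotonicity $(1+x)^\alpha - x^\alpha \ge (1+y)^\alpha - y^\alpha$ for $0\le y\le x$.

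First I would establish the two-site building block, the analogue of Lemma~\ref{ee}. Fix a $2\otimes 2\otimes 2^{n-2}$ state and suppose $T_{qA|BC}\ge T_{qAB}+\mu T_{qAC}$ with $\mu\ge 1$ (such a $\mu$ exists by the monogamy relation $T_{qA|BC}\ge T_{qAB}+T_{qAC}$ for $2\le q\le 3$), and suppose $T_{qAB}\ge l\,T_{qAC}$ with $l\ge 1$. Then, writing $T_{qA|BC}^\alpha = (T_{qA|BC})^\alpha \ge (T_{qAB}+\mu T_{qAC})^\alpha$ and factoring out $\mu^\alpha T_{qAC}^\alpha$,
\begin{eqnarray*}
T_{qA|BC}^\alpha &\ge& \mu^\alpha T_{qAC}^\alpha\Big[\big(\mu^{-1}(T_{qAB}/T_{qAC})+1\big)^\alpha - \big(\mu^{-1}(T_{qAB}/T_{qAC})\big)^\alpha\Big] + T_{qAB}^\alpha \\
&\ge& \mu^\alpha T_{qAC}^\alpha\big[(\mu^{-1}l+1)^\alpha - (\mu^{-1}l)^\alpha\big] + T_{qAB}^\alpha = \big[(\mu+l)^\alpha - l^\alpha\big]T_{qAC}^\alpha + T_{qAB}^\alpha,
\end{eqnarray*}
where the second inequality is Lemma~\ref{dd}(\ref{e}) applied with $x = \mu^{-1}(T_{qAB}/T_{qAC}) \ge \mu^{-1}l = y$. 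The degenerate case $T_{qAB}=0$ forces $T_{qAC}=0$ and the bound is trivial.

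Next I would iterate this block exactly as in the proof of Theorem~\ref{ff}. Under the first batch of hypotheses ($T_{qAB_i}\ge l_i T_{qA|B_{i+1}\cdots B_{N-1}}$ and $T_{qA|B_i\cdots B_{N-1}}\ge T_{qAB_i}+\mu_i T_{qA|B_{i+1}\cdots B_{N-1}}$ for $i=1,\dots,m$), repeated application of the two-site inequality with $\mathcal{K}_r=(\mu_r+l_r)^\alpha - l_r^\alpha$ peels off $B_1,\dots,B_m$ and yields
$T_{qA|B_1\cdots B_{N-1}}^\alpha \ge T_{qAB_1}^\alpha + \mathcal{K}_1 T_{qAB_2}^\alpha + \cdots + \mathcal{K}_1\cdots\mathcal{K}_{m-1}T_{qAB_m}^\alpha + \mathcal{K}_1\cdots\mathcal{K}_m T_{qA|B_{m+1}\cdots B_{N-1}}^\alpha$. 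For the tail $j=m+1,\dots,N-2$ the hypotheses are stated with the roles of the two terms swapped ($T_{qA|B_{j+1}\cdots B_{N-1}}\ge l_j T_{qAB_j}$ and $T_{qA|B_j\cdots B_{N-1}}\ge \mu_j T_{qAB_j}+T_{qA|B_{j+1}\cdots B_{N-1}}$), so the same two-site argument now peels off the single site $B_j$ with the roles of $T_{qAB_j}$ and $T_{qA|B_{j+1}\cdots B_{N-1}}$ interchanged, giving $T_{qA|B_{m+1}\cdots B_{N-1}}^\alpha \ge \mathcal{K}_{m+1}T_{qAB_{m+1}}^\alpha + \cdots + \mathcal{K}_{N-2}T_{qAB_{N-2}}^\alpha + T_{qAB_{N-1}}^\alpha$. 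Substituting this into the previous line gives the claimed inequality.

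The only genuinely delicate point, and the step I would check most carefully, is whether the monogamy input $T_{qA|BC}\ge T_{qAB}+T_{qAC}$ actually applies at every stage of the recursion to guarantee the existence of each $\mu_i\ge 1$. This is legitimate because the relevant statement from~\cite{17,43} is a multi-qubit monogamy relation $T_{qA|B_1\cdots B_{N-1}}\ge\sum_i T_{qAB_i}$ valid for $2\le q\le 3$, and grouping $B_{i+1},\dots,B_{N-1}$ into a single (higher-dimensional) party $C$ keeps us in the $2\otimes 2\otimes d$ regime where $T_q=g_q(C^2)$ and $g_q$ is monotone increasing and convex, so the two-qubit-type argument goes through; the hypotheses in the theorem statement simply postulate the existence of valid $\mu_r,l_r$ rather than requiring us to produce them. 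Everything else is the routine iteration already carried out for concurrence, so no further obstacle arises.
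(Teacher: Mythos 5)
Your proposal is correct and is essentially the paper's intended argument: the paper gives no separate proof but states that the Tsallis-$q$ case follows by "a similar consideration to concurrence," i.e.\ exactly your route of proving the two-site bound $T_{qA|BC}^\alpha\geq T_{qAB}^\alpha+\bigl[(\mu+l)^\alpha-l^\alpha\bigr]T_{qAC}^\alpha$ via Lemma~\ref{dd} with $t=\alpha\geq1$ and then iterating as in Theorem~\ref{ff}, the hypotheses supplying the needed $\mu_r,l_r$ directly in terms of $T_q$. Your closing remark correctly identifies that the range $2\leq q\leq3$ only serves to make the hypotheses attainable (via the known Tsallis monogamy relation), not as an ingredient of the algebraic iteration itself.
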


The above theorem gives rise to, in particular,

\begin{thm}
Let $\mu_r\geq1$ and $l_r\geq1$ ($1\le r\le N-2$) be real numbers. For any $N$-qubit mixed state $\rho_{AB_1\cdots B_{N-1}}\in\mathcal{H}_A\otimes\mathcal{H}_{B_1}\otimes\cdots\otimes\mathcal{H}_{B_{N-1}}$, if $T_{qAB_i}\geq l_iT_{qA|B_{i+1}\cdots B_{N-1}}$, $T_{qA|B_i\cdots B_{N-1}}\geq T_{qAB_i}+\mu_iT_{qA|B_{i+1}\cdots B_{N-1}}$ for all $i=1,2,\cdots,N-2$, then
\begin{eqnarray}
T_{qA|B_1\cdots B_{N-1}}^\alpha\geq T_{qAB_1}^\alpha+\mathcal{K}_1T_{qAB_2}^\alpha+\cdots
+\mathcal{K}_1\cdots\mathcal{K}_{N-2}T_{qAB_{N-1}}^\alpha
\end{eqnarray}
for all $\alpha\geq1$ and $2\leq q\leq3$, where $\mathcal{K}_r=(\mu_r+l_r)^\alpha-l_r^\alpha$ with $1\le r\le N-2$.
\end{thm}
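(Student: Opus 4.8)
The plan is to obtain this statement as the degenerate case $m=N-2$ of the preceding theorem, in which the index range $j=m+1,\dots,N-2$ is empty and every subsystem $B_i$, $i=1,\dots,N-2$, is of the ``left'' type; equivalently, one can prove it directly by iterating a single-step estimate that is the exact Tsallis analogue of Lemma~\ref{ee}. First I would record that estimate. For a $2\otimes2\otimes2^{n-2}$ state $\rho_{ABC}$ with $2\le q\le3$, the Tsallis-$q$ monogamy inequality gives $T_{qA|BC}\ge T_{qAB}+T_{qAC}$, so there is $\mu\ge1$ with $T_{qA|BC}\ge T_{qAB}+\mu\,T_{qAC}$; if in addition $T_{qAB}\ge l\,T_{qAC}$ with $l\ge1$, then for every $\alpha\ge1$
\[
T_{qA|BC}^{\alpha}\ge (T_{qAB}+\mu T_{qAC})^{\alpha}
=\mu^{\alpha}T_{qAC}^{\alpha}\bigl[(\mu^{-1}T_{qAB}/T_{qAC}+1)^{\alpha}-(\mu^{-1}T_{qAB}/T_{qAC})^{\alpha}\bigr]+T_{qAB}^{\alpha}.
\]
Since $\mu^{-1}T_{qAB}/T_{qAC}\ge\mu^{-1}l\ge0$, Lemma~\ref{dd} (inequality (\ref{e}) with exponent $t=\alpha\ge1$) bounds the bracket below by $(\mu^{-1}l+1)^{\alpha}-(\mu^{-1}l)^{\alpha}$, and multiplying by $\mu^{\alpha}T_{qAC}^{\alpha}$ yields $T_{qA|BC}^{\alpha}\ge T_{qAB}^{\alpha}+\bigl((\mu+l)^{\alpha}-l^{\alpha}\bigr)T_{qAC}^{\alpha}$. (If $T_{qAB}=0$ then $T_{qAC}=0$ and the bound is trivial, so no division by zero occurs.)

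Next I would iterate this single-step bound down the chain $B_1,B_2,\dots,B_{N-1}$, exactly as in the proof of Theorem~\ref{ff}. At the $r$-th step I treat $B_r$ as the ``$B$'' subsystem and $B_{r+1}\cdots B_{N-1}$ as the ``$C$'' subsystem; the hypotheses $T_{qAB_r}\ge l_r T_{qA|B_{r+1}\cdots B_{N-1}}$ and $T_{qA|B_r\cdots B_{N-1}}\ge T_{qAB_r}+\mu_r T_{qA|B_{r+1}\cdots B_{N-1}}$ are precisely what the single-step estimate requires, giving
\[
T_{qA|B_r\cdots B_{N-1}}^{\alpha}\ge T_{qAB_r}^{\alpha}+\mathcal{K}_r\,T_{qA|B_{r+1}\cdots B_{N-1}}^{\alpha},\qquad \mathcal{K}_r=(\mu_r+l_r)^{\alpha}-l_r^{\alpha}.
\]
Chaining these inequalities for $r=1,\dots,N-2$, and using $T_{qA|B_{N-1}}=T_{qAB_{N-1}}$ at the final step, produces the telescoping coefficients $1,\mathcal{K}_1,\mathcal{K}_1\mathcal{K}_2,\dots,\mathcal{K}_1\cdots\mathcal{K}_{N-2}$ in front of $T_{qAB_1}^{\alpha},\dots,T_{qAB_{N-1}}^{\alpha}$ respectively, which is exactly the asserted inequality.

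There is essentially no hard step: the result is a clean corollary, with all analytic content already packaged in Lemma~\ref{dd} and in the cited subadditivity $g_q(x^2+y^2)\ge g_q(x^2)+g_q(y^2)$ for $2\le q\le3$ (together with $T_q(\rho)=g_q(C^2(\rho))$ for two-qubit states, which makes each bipartite cut $2\otimes2$ effectively). The only points meriting a line of care are (i) justifying the existence of $\mu\ge1$ at each step from the Tsallis monogamy inequality, and (ii) observing that, although the preceding theorem is stated for $1\le m\le N-3$, its proof is merely an iteration of the single-step bound and so remains valid when every subsystem is of ``left'' type, i.e.\ formally $m=N-2$ with an empty $j$-range, which is precisely the present statement. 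As in the concurrence case, one may also note that $\mathcal{K}_r\ge(1+l_r)^{\alpha}-l_r^{\alpha}=((1+k)^{\alpha}-1)/k^{\alpha}$ when $l_r=1/k$, so these bounds are at least as tight as those of \cite{25,27}.
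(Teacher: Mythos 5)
Your proposal is correct and follows essentially the same route as the paper: the paper obtains this theorem as the all-$i$ special case of the preceding Tsallis theorem, which it justifies by "a similar consideration to concurrence," i.e.\ exactly your single-step estimate (the Tsallis analogue of Lemma~\ref{ee}, with exponent $\alpha\geq1$ in Lemma~\ref{dd} instead of $\alpha/2$) iterated along the chain $B_1,\dots,B_{N-1}$. Your added care about the empty $j$-range, the existence of $\mu_r$ (which is in fact part of the hypotheses), and the $T_{qAC}=0$ degenerate case just fills in details the paper leaves implicit.
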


\begin{eg}
Let us consider the state in Example \ref{a} with $\lambda_0=\lambda_3=\lambda_4={1}/{\sqrt{5}}$, $\lambda_2=\sqrt{{2}/{5}}$ and $\lambda_1=0$. For $q=2$, we have $T_{2A|BC}={8}/{25},T_{2AB}={4}/{25}$, and $T_{2AC}={2}/{25}$. Then
\begin{eqnarray}
T_{2AB}^\alpha+(2^\alpha-1)T_{2AC}^\alpha&=&({4}/{25})^\alpha+(2^\alpha-1)({2}/{25})^\alpha,\\
T_{2AB}^\alpha+(((1+k)^\alpha-1)/k^\alpha)T_{2AC}^\alpha&=&({4}/{25})^\alpha+(((1+k)^\alpha-1)/k^\alpha)({2}/{25})^\alpha,\\
T_{2AB}^\alpha+((\mu+l)^\alpha-l^\alpha)T_{2AC}^\alpha&=&({4}/{25})^\alpha+((\mu+l)^\alpha-l^\alpha)({2}/{25})^\alpha.
\end{eqnarray}
We see that our result is better than the one given in \cite{25,27} for $\alpha\geq1$, see Figure 4.
\end{eg}
\begin{figure}[h]
\begin{center}
\includegraphics[width=8cm,clip]{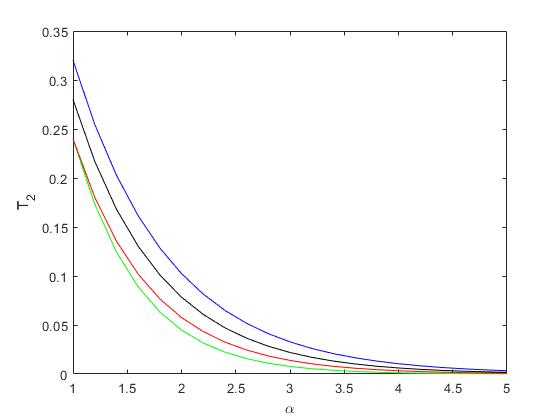}
\caption{From top to bottom, the first curve represents the Tsallis-$q$ entanglement $T_q(|\phi\rangle_{A|BC})$, the third and fourth curves represent the lower bounds from \cite{27} and \cite{25}, respectively, the second curve represents the lower bound from our result.}	
\end{center}
\end{figure}

For the Tsallis-$q$ entanglement of assistance (TEoA), we have

\begin{thm}
 Let $0\textless\mu_r\leq1$ and $l_r\geq1$ ($1\le r\le N-2$) be real numbers.For any $N$-qubit mixed state $\rho_{AB_1\cdots B_{N-1}}\in\mathcal{H}_A\otimes\mathcal{H}_{B_1}\otimes\cdots\otimes\mathcal{H}_{B_{N-1}}$, if $T_{aqAB_i}\geq l_iT_{aqA|B_{i+1}\cdots B_{N-1}}$, $T_{aqA|B_i\cdots B_{N-1}}\leq T_{aqAB_i}+\mu_iT_{aqA|B_{i+1}\cdots B_{N-1}}$ for $i=1,2,\cdots,m$, and $T_{aqA|B_{j+1}\cdots B_{N-1}}\geq l_jT_{aqAB_j}$, $T_{aqA|B_j\cdots B_{N-1}}\leq\mu_jT_{aqAB_j}+T_{aqA|B_{j+1}\cdots B_{N-1}}$ for $j=m+1,
\cdots,N-2$, $1\leq m\leq N-3,N\geq4$, we have
\begin{eqnarray}
T^\alpha_{aqA|B_1\cdots B_{N-1}}&\leq&T_{aqAB_1}^\alpha+\mathcal{K}_1T_{aqAB_2}^\alpha+\cdots+\mathcal{K}_1\cdots\mathcal{K}_{m-1}T_{aqAB_m}^\alpha\nonumber\\
& &+\mathcal{K}_1\cdots\mathcal{K}_{m}(\mathcal{K}_{m+1}T_{aqAB_{m+1}}^\alpha+\cdots+\mathcal{K}_{N-2}T_{aqAB_{N-2}}^\alpha)\nonumber\\
& &+\mathcal{K}_1\cdots\mathcal{K}_{m}T_{aqAB_{N-1}}^\alpha
\end{eqnarray}
for all $0\leq\alpha\leq1$ with $1\leq q\leq2$ and $3\leq q\leq4$, where $\mathcal{K}_r=(\mu_r+l_r)^{\alpha}-l_r^{\alpha}$, $1\le r\le N-2$.
\end{thm}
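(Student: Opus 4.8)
The plan is to mirror the proof of Theorem \ref{ff}, but in the ``upper bound'' direction: replace the concurrence monogamy input by the Tsallis-$q$ entanglement of assistance polygamy inequality $T_{aqA|BC}\le T_{aqAB}+T_{aqAC}$, which holds for $1\le q\le2$ and $3\le q\le4$ \cite{29}, and use inequality (\ref{f}) of Lemma \ref{dd} in place of (\ref{e}). First I would establish the two-party analog of Lemma \ref{ee}: for a $2\otimes2\otimes2^{n-2}$ state $\rho_{ABC}$, real numbers $0<\mu\le1$, $l\ge1$, and $0\le\alpha\le1$, if $T_{aqAB}\ge lT_{aqAC}$ and $T_{aqA|BC}\le T_{aqAB}+\mu T_{aqAC}$, then $T_{aqA|BC}^\alpha\le T_{aqAB}^\alpha+\bigl((\mu+l)^\alpha-l^\alpha\bigr)T_{aqAC}^\alpha$. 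To prove it, first dispose of the case $T_{aqAC}=0$ (then the extra term vanishes and $T_{aqA|BC}\le T_{aqAB}$ already gives the claim); when $T_{aqAC}>0$, raise the hypothesis $T_{aqA|BC}\le T_{aqAB}+\mu T_{aqAC}$ to the $\alpha$th power (legitimate, $x\mapsto x^\alpha$ being increasing) and rewrite the right-hand side as $\mu^\alpha T_{aqAC}^\alpha\bigl[(\mu^{-1}(T_{aqAB}/T_{aqAC})+1)^\alpha-(\mu^{-1}(T_{aqAB}/T_{aqAC}))^\alpha\bigr]+T_{aqAB}^\alpha$. Since $\mu^{-1}(T_{aqAB}/T_{aqAC})\ge\mu^{-1}l\ge0$, applying inequality (\ref{f}) with $s$ replaced by $\alpha$, $x=\mu^{-1}(T_{aqAB}/T_{aqAC})$, $y=\mu^{-1}l$ bounds the bracket above by $(1+\mu^{-1}l)^\alpha-(\mu^{-1}l)^\alpha$; multiplying by the nonnegative factor $\mu^\alpha T_{aqAC}^\alpha$ turns the first term into $\bigl((\mu+l)^\alpha-l^\alpha\bigr)T_{aqAC}^\alpha$, which is the asserted inequality.

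Second, I would iterate this two-party estimate along the chain exactly as in the proof of Theorem \ref{ff}, producing now an upper bound. For $i=1,\dots,m$, applying the lemma with $B_i$ in the role of $B$ and $B_{i+1}\cdots B_{N-1}$ in the role of $C$, using the hypotheses $T_{aqAB_i}\ge l_iT_{aqA|B_{i+1}\cdots B_{N-1}}$ and $T_{aqA|B_i\cdots B_{N-1}}\le T_{aqAB_i}+\mu_iT_{aqA|B_{i+1}\cdots B_{N-1}}$, gives $T_{aqA|B_i\cdots B_{N-1}}^\alpha\le T_{aqAB_i}^\alpha+\mathcal{K}_iT_{aqA|B_{i+1}\cdots B_{N-1}}^\alpha$; stacking these for $i=1,\dots,m$ yields
\[
T_{aqA|B_1\cdots B_{N-1}}^\alpha\le T_{aqAB_1}^\alpha+\mathcal{K}_1T_{aqAB_2}^\alpha+\cdots+\mathcal{K}_1\cdots\mathcal{K}_{m-1}T_{aqAB_m}^\alpha+\mathcal{K}_1\cdots\mathcal{K}_mT_{aqA|B_{m+1}\cdots B_{N-1}}^\alpha .
\]
For $j=m+1,\dots,N-2$ the roles are swapped: applying the same lemma with $T_{aqAB_j}$ now the ``small'' term (using $T_{aqA|B_{j+1}\cdots B_{N-1}}\ge l_jT_{aqAB_j}$ and $T_{aqA|B_j\cdots B_{N-1}}\le\mu_jT_{aqAB_j}+T_{aqA|B_{j+1}\cdots B_{N-1}}$) gives $T_{aqA|B_j\cdots B_{N-1}}^\alpha\le\mathcal{K}_jT_{aqAB_j}^\alpha+T_{aqA|B_{j+1}\cdots B_{N-1}}^\alpha$, and telescoping down to $B_{N-1}$ gives
\[
T_{aqA|B_{m+1}\cdots B_{N-1}}^\alpha\le\mathcal{K}_{m+1}T_{aqAB_{m+1}}^\alpha+\cdots+\mathcal{K}_{N-2}T_{aqAB_{N-2}}^\alpha+T_{aqAB_{N-1}}^\alpha .
\]
Substituting the second chain into the first, multiplying through by the nonnegative coefficients $\mathcal{K}_1\cdots\mathcal{K}_m$, yields the claimed inequality with $\mathcal{K}_r=(\mu_r+l_r)^\alpha-l_r^\alpha$.

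The two telescoping inductions and the edge-case bookkeeping (when some $T_{aq}$ vanishes) are routine. The one point that genuinely needs care is the direction of the estimate: because $0\le\alpha\le1$ one must invoke (\ref{f}) rather than (\ref{e}), and because the target is an \emph{upper} bound one must start from the polygamy inequality rather than a monogamy one; one should also verify that raising both sides to the $\alpha$th power and multiplying by $\mu^\alpha T_{aqAC}^\alpha$ preserve the inequality, which holds since all quantities are nonnegative and $\alpha\ge0$. I expect the main (modest) obstacle to be making the two chained inductions consistent with the index split at $m$, and checking that the hypotheses listed in the theorem are precisely the ones each application of the two-party lemma requires, together with the implicit use of the $q$-range $1\le q\le2$, $3\le q\le4$ that guarantees the underlying polygamy bound.
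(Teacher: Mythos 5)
Your proposal is correct and follows essentially the same route the paper intends: the paper states this theorem as a "similar consideration" to the concurrence case, i.e.\ an upper-bound analog of Lemma \ref{ee} obtained from the hypothesis $T_{aqA|BC}\le T_{aqAB}+\mu T_{aqAC}$ together with inequality (\ref{f}) of Lemma \ref{dd}, then iterated along the chain exactly as in Theorem \ref{ff}. Your handling of the power-$\alpha$ step, the role swap at index $m$, and the vanishing-term edge case matches the intended argument, so nothing further is needed.
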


Particularly, one has

\begin{thm}
Let $0\textless\mu_r\leq1$ and $l_r\geq1$ ($1\le r\le N-2$) be real numbers. For any $N$-qubit mixed state $\rho_{AB_1\cdots B_{N-1}}\in\mathcal{H}_A\otimes\mathcal{H}_{B_1}\otimes\cdots\otimes\mathcal{H}_{B_{N-1}}$, if $T_{aqAB_i}\geq l_iT_{aqA|B_{i+1}\cdots B_{N-1}}$, $T_{aqA|B_i\cdots B_{N-1}}\leq T_{aqAB_i}+\mu_iT_{aqA|B_{i+1}\cdots B_{N-1}}$ for all $i=1,2,\cdots,N-2$, then
\begin{eqnarray}
T_{aqA|B_1\cdots B_{N-1}}^\alpha\leq T_{aqAB_1}^\alpha+\mathcal{K}_1T_{aqAB_2}^\alpha+\cdots
+\mathcal{K}_1\cdots\mathcal{K}_{N-2}T_{aqAB_{N-1}}^\alpha
\end{eqnarray}
for all $0\leq\alpha\leq1$ with $1\leq q\leq2$ and $3\leq q\leq4$, where $\mathcal{K}_r=(\mu_r+l_r)^{\alpha}-l_r^{\alpha}$, $1\le r\le N-2$.
\end{thm}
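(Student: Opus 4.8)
The plan is to mirror the proof of Theorem~\ref{ff} (the concurrence case), only now with all inequalities reversed, since the entanglement of assistance is polygamous rather than monogamous. First I would isolate the one-step building block: for $0\le\alpha\le1$, $0<\mu\le1$, $l\ge1$ and non-negative reals with $E_{aAB}\ge l\,E_{aA|\cdots}$ together with $E_{aA|BC}\le E_{aAB}+\mu E_{aA|BC'}$, I claim
\begin{equation}
E_{aA|BC}^\alpha\le E_{aAB}^\alpha+\bigl((\mu+l)^\alpha-l^\alpha\bigr)E_{aAC}^\alpha.
\end{equation}
To get this I would use subadditivity of $t\mapsto t^\alpha$ for $0\le\alpha\le1$ (i.e. $(a+b)^\alpha\le a^\alpha+b^\alpha$ is false in general, but here the relevant estimate is $(x+y)^\alpha\le x^\alpha+\bigl((1+x/y)^\alpha-(x/y)^\alpha\bigr)y^\alpha$-type algebra exactly as in the display~(\ref{o}), run with exponent $\alpha$ in place of $\alpha/\sqrt2$), then apply Lemma~\ref{dd}, specifically inequality~(\ref{f}), which is the decreasing-difference estimate valid for the exponent $s=\alpha\in[0,1]$. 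Concretely: $E_{aA|BC}^\alpha\le(E_{aAB}+\mu E_{aA|BC'})^\alpha=\mu^\alpha E_{aA|BC'}^\alpha\bigl[(1+\mu^{-1}E_{aAB}^{}/E_{aA|BC'})^\alpha-(\mu^{-1}E_{aAB}/E_{aA|BC'})^\alpha\bigr]+E_{aAB}^\alpha$, and since $\mu^{-1}E_{aAB}/E_{aA|BC'}\ge \mu^{-1}l$, inequality~(\ref{f}) (the ratio is $\ge$ the comparison value, and $g(\cdot,s)$ is \emph{decreasing}) gives the bracket $\le(1+\mu^{-1}l)^\alpha-(\mu^{-1}l)^\alpha$, hence the claimed one-step bound with $\mathcal{K}=(\mu+l)^\alpha-l^\alpha$.

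Next I would iterate this one-step bound down the chain $A|B_1\cdots B_{N-1}$ exactly as in~(\ref{x}): for $i=1,\dots,m$ the hypotheses $E_{aAB_i}\ge l_iE_{aA|B_{i+1}\cdots B_{N-1}}$ and $E_{aA|B_i\cdots B_{N-1}}\le E_{aAB_i}+\mu_iE_{aA|B_{i+1}\cdots B_{N-1}}$ let me peel off one term at a time, accumulating the product $\mathcal{K}_1\cdots\mathcal{K}_{i-1}$ in front of $E_{aAB_i}^\alpha$ and leaving $\mathcal{K}_1\cdots\mathcal{K}_m\,E_{aA|B_{m+1}\cdots B_{N-1}}^\alpha$ as the tail. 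Then, as in~(\ref{y}), I would treat the tail with the other set of hypotheses (for $j=m+1,\dots,N-2$ the roles of the two arguments are swapped, $E_{aA|B_{j+1}\cdots B_{N-1}}\ge l_jE_{aAB_j}$ and $E_{aA|B_j\cdots B_{N-1}}\le\mu_jE_{aAB_j}+E_{aA|B_{j+1}\cdots B_{N-1}}$), giving
\begin{equation}
E_{aA|B_{m+1}\cdots B_{N-1}}^\alpha\le\mathcal{K}_{m+1}E_{aAB_{m+1}}^\alpha+\cdots+\mathcal{K}_{N-2}E_{aAB_{N-2}}^\alpha+E_{aAB_{N-1}}^\alpha.
\end{equation}
Substituting this tail estimate into the first chain yields exactly the asserted inequality. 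The second (particular) theorem follows by taking $m=N-2$, so that the ``$j$-part'' is empty and only the first chain remains; alternatively it is the $m=N-2$ specialization read off directly.

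The main obstacle — really the only delicate point — is bookkeeping the direction of each inequality: because $0\le\alpha\le1$ and $0<\mu\le1$, every monogamy-style ``$\ge$'' in the concurrence proof becomes a polygamy-style ``$\le$'', and one must check that the version of Lemma~\ref{dd} being invoked is~(\ref{f}) (the decreasing-difference inequality) rather than~(\ref{e}), and that the comparison in the ratio runs the correct way (the ratio $E_{aAB}/E_{aA|\cdots}$ is \emph{at least} $l$, and $g(\cdot,s)$ is decreasing, so the bracket is bounded \emph{above}). A secondary routine check is the degenerate case $E_{aA|B_{i+1}\cdots B_{N-1}}=0$: then the hypothesis forces the corresponding reduced assisted entanglements to vanish and the relevant term contributes $0$, so the bound holds trivially — this is the analogue of the ``$C_{AB}=0\Rightarrow C_{AC}=0$'' remark in Lemma~\ref{ee}. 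Once these sign conventions are pinned down, the iteration is verbatim the same as in Theorem~\ref{ff} and Theorem~\ref{gg}.
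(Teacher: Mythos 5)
Your proposal is correct and follows essentially the route the paper intends: the paper states this TEoA theorem without an explicit proof ("taking a similar consideration to concurrence"), and the intended argument is exactly yours — the one-step bound obtained from the identity expansion of $(x+\mu y)^\alpha$ together with Lemma \ref{dd}, inequality (\ref{f}), then iterated along the chain $A|B_1\cdots B_{N-1}$ as in Theorems \ref{ff} and \ref{gg}, with the stated theorem read off when the conditions hold for all $i=1,\dots,N-2$. One cosmetic remark: your parenthetical claim that $(a+b)^\alpha\le a^\alpha+b^\alpha$ "is false in general" for $0\le\alpha\le1$ is itself incorrect (subadditivity does hold for nonnegative $a,b$), but your argument never relies on it, so nothing is affected.
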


\section{\bf Tighter monogamy and polygamy relations for R$\acute{e}$nyi-$\alpha$ entanglement}
For a bipartite pure state $|\phi\rangle_{AB}$, the R$\acute{e}$nyi-$\alpha$ entanglement is defined as $E_{\acute\alpha}(|\phi\rangle_{AB})=S_{\acute\alpha}(\rho_A)$, where $S_{\acute\alpha}(\rho)=\frac{1}{1-\acute{\alpha}}\log_2(\Tr\rho^{\acute\alpha})$ for any $\acute{\alpha}\textgreater0$ and $\acute{\alpha}\not=1$, and $\lim\limits_{\acute{\alpha}\to1}S_{\acute\alpha}(\rho)=S(\rho)=-\Tr\rho\log_2\rho$ \cite{44}. For a bipartite mixed state $\rho_{AB}$, the R$\acute{e}$nyi-$\alpha$ entanglement is given by $E_{\acute\alpha}(\rho_{AB})=\min\limits_{\{p_i,|\phi_i\rangle\}}\sum\limits_ip_iE_{\acute\alpha}(|\phi_i\rangle)$, where the minimum is taken over all possible pure state decompositions of $\rho_{AB}$. For each $\acute{\alpha}\textgreater0$, one has $E_{\acute\alpha}(\rho_{AB})=f_{\acute\alpha}(C(\rho_{AB}))$, where $f_{\acute\alpha}(x)=\frac{1}{1-\acute\alpha}\log[(\frac{1-\sqrt{1-x^2}}{2})^2+(\frac{1+\sqrt{1-x^2}}{2})^2]$ is a monotonically increasing and convex function \cite{28}. For $\acute{\alpha}\geq2$ and any $N$-qubit state $\rho_{AB_1B_2\cdots B_{N-1}}$, one has $E_{\acute\alpha}(\rho_{A|B_1B_2\cdots B_{N-1}})\geq E_{\acute\alpha}(\rho_{AB_1})+E_{\acute\alpha}(\rho_{A|B_2})+\cdots+E_{\acute\alpha}(\rho_{A|B_{N-1}})$ \cite{17}.

The R$\acute{e}$nyi-$\alpha$ entanglement of assistance (REoA), a dual quantity to the R$\acute{e}$nyi-$\alpha$ entanglement, is defined as $E_{a\acute{\alpha}}(\rho_{AB})=\max\limits_{\{p_i,|\phi_i\rangle\}}\sum\limits_ip_iE_{\acute\alpha}(|\phi_i\rangle)$, where the maximum is taken over all possible pure state decompositions of $\rho_{AB}$. For $\acute{\alpha}\in[\frac{\sqrt{7}-1}{2},\frac{\sqrt{13}-1}{2}]$ and any $N$-qubit state $\rho_{AB_1B_2\cdots B_{N-1}}$, a polygamy relation of multi-partite quantum entanglement in terms of REoA has been presented \cite{23}, $E_{a\acute{\alpha}}(\rho_{A|B_1B_2\cdots B_{N-1}})\leq E_{a\acute{\alpha}}(\rho_{AB_1})+E_{a\acute{\alpha}}(\rho_{A|B_2})+\cdots+E_{a\acute{\alpha}}(\rho_{A|B_{N-1}})$.

We propose the following monogamy and polygamy relations for the R$\acute{e}$nyi-$\alpha$ entanglement, which are tighter than the previous results. For convenience, we denote by $E_{\acute{\alpha}AB_j}=E_{\acute{\alpha}}(\rho_{AB_j})$ for $j=1,2,\cdots,N-1$, and $E_{\acute{\alpha}A|B_1B_2\cdots B_{N-1}}=E_{\acute{\alpha}}(\rho_{A|B_1 B_2\cdots B_{N-1}})$.

\begin{thm}
Let $\mu_r\geq1$ and $l_r\geq1$ ($1\le r\le N-2$) be real numbers. For any $N$-qubit mixed state $\rho_{AB_1\cdots B_{N-1}}\in\mathcal{H}_A\otimes\mathcal{H}_{B_1}\otimes\cdots\otimes\mathcal{H}_{B_{N-1}}$, if $E_{\acute{\alpha}AB_i}\geq l_iE_{\acute{\alpha}A|B_{i+1}\cdots B_{N-1}}$, $E_{\acute{\alpha}A|B_i\cdots B_{N-1}}\geq E_{\acute{\alpha}AB_i}+\mu_iE_{\acute{\alpha}A|B_{i+1}\cdots B_{N-1}}$ for $i=1,2,\cdots,m$, and $E_{\acute{\alpha}A|B_{j+1}\cdots B_{N-1}}\geq l_jE_{\acute{\alpha}AB_j}$, $E_{\acute{\alpha}A|B_j\cdots B_{N-1}}\geq\mu_jE_{\acute{\alpha}AB_j}+E_{\acute{\alpha}A|B_{j+1}\cdots B_{N-1}}$ for $j=m+1,
\cdots,N-2$, $1\leq m\leq N-3,N\geq4$, we have
\begin{eqnarray}
E_{\acute{\alpha}A|B_1\cdots B_{N-1}}^\alpha&\geq& E_{\acute{\alpha}AB_1}^\alpha+\mathcal{K}_1E_{\acute{\alpha}AB_2}^\alpha+\cdots+\mathcal{K}_1\cdots\mathcal{K}_{m-1}E_{\acute{\alpha}AB_m}^\alpha\nonumber\\
& &+\mathcal{K}_1\cdots\mathcal{K}_{m}(\mathcal{K}_{m+1}E_{\acute{\alpha}AB_{m+1}}^\alpha+\cdots+\mathcal{K}_{N-2}E_{\acute{\alpha}AB_{N-2}}^\alpha)\nonumber\\
& &+\mathcal{K}_1\cdots\mathcal{K}_{m}E_{\acute{\alpha}AB_{N-1}}^\alpha
\end{eqnarray}
for all $\alpha\geq1$ and $\acute{\alpha}\geq2$, where $\mathcal{K}_r=(\mu_r+l_r)^{\alpha}-l_r^{\alpha}$, $1\le r\le N-2$.
\end{thm}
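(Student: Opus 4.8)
The plan is to reproduce verbatim the two-step mechanism used in the proofs of Theorems~\ref{ff} and~\ref{gg}: first extract an elementary two-term inequality that upgrades an additive ``$\mu$-monogamy'' bound to one for the $\alpha$-th power, and then iterate it along the chain $B_1,\dots,B_{N-1}$, splitting at the index $m$. All analytic input is already available: $E_{\acute{\alpha}}(\rho_{AB})=f_{\acute{\alpha}}(C(\rho_{AB}))$ with $f_{\acute{\alpha}}$ increasing and convex, the seed monogamy $E_{\acute{\alpha}}(\rho_{A|B_1\cdots B_{N-1}})\ge\sum_i E_{\acute{\alpha}}(\rho_{AB_i})$ for $\acute{\alpha}\ge2$, and the monotonicity statement~(\ref{e}) of Lemma~\ref{dd}.

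\textbf{Step 1 (two-term inequality).} I would first record the Rényi analogue of Lemma~\ref{ee}: if $w,x,y\ge0$ satisfy $x\ge l\,y$ with $l\ge1$ and $w\ge x+\mu\,y$ with $\mu\ge1$, then for every $\alpha\ge1$
\begin{equation*}
w^{\alpha}\ \ge\ x^{\alpha}+\bigl((\mu+l)^{\alpha}-l^{\alpha}\bigr)\,y^{\alpha}.
\end{equation*}
For $y=0$ this is trivial. For $y>0$ one uses $w^{\alpha}\ge(x+\mu y)^{\alpha}$, the identity $(x+\mu y)^{\alpha}=(\mu y)^{\alpha}\bigl[(1+\tfrac{x}{\mu y})^{\alpha}-(\tfrac{x}{\mu y})^{\alpha}\bigr]+x^{\alpha}$, the bound $\tfrac{x}{\mu y}\ge\tfrac{l}{\mu}$, and the fact that $g(s,\alpha)=(1+s)^{\alpha}-s^{\alpha}$ is nondecreasing in $s$ for $\alpha\ge1$ by~(\ref{e}); then $(\mu y)^{\alpha}\bigl[(1+\tfrac{l}{\mu})^{\alpha}-(\tfrac{l}{\mu})^{\alpha}\bigr]=\bigl((\mu+l)^{\alpha}-l^{\alpha}\bigr)y^{\alpha}$. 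Note the exponent is $\alpha$, not $\alpha/2$ as for concurrence, precisely because for $\acute{\alpha}\ge2$ the underlying monogamy for $E_{\acute{\alpha}}$ already holds at the first power; this is what forces the range $\alpha\ge1$, $\acute{\alpha}\ge2$ in the statement.

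\textbf{Step 2 (iteration).} Apply Step~1 with $(w,x,y)=(E_{\acute{\alpha}A|B_i\cdots B_{N-1}},E_{\acute{\alpha}AB_i},E_{\acute{\alpha}A|B_{i+1}\cdots B_{N-1}})$ and $(l,\mu)=(l_i,\mu_i)$ successively for $i=1,2,\dots,m$, using the head hypotheses $E_{\acute{\alpha}AB_i}\ge l_iE_{\acute{\alpha}A|B_{i+1}\cdots B_{N-1}}$ and $E_{\acute{\alpha}A|B_i\cdots B_{N-1}}\ge E_{\acute{\alpha}AB_i}+\mu_iE_{\acute{\alpha}A|B_{i+1}\cdots B_{N-1}}$; exactly as in~(\ref{x}) this telescopes to
\begin{equation*}
E_{\acute{\alpha}A|B_1\cdots B_{N-1}}^{\alpha}\ \ge\ E_{\acute{\alpha}AB_1}^{\alpha}+\mathcal{K}_1E_{\acute{\alpha}AB_2}^{\alpha}+\cdots+\mathcal{K}_1\cdots\mathcal{K}_{m-1}E_{\acute{\alpha}AB_m}^{\alpha}+\mathcal{K}_1\cdots\mathcal{K}_{m}E_{\acute{\alpha}A|B_{m+1}\cdots B_{N-1}}^{\alpha}.
\end{equation*}
For the tail the two arguments swap roles: from $E_{\acute{\alpha}A|B_{j+1}\cdots B_{N-1}}\ge l_jE_{\acute{\alpha}AB_j}$ and $E_{\acute{\alpha}A|B_j\cdots B_{N-1}}\ge\mu_jE_{\acute{\alpha}AB_j}+E_{\acute{\alpha}A|B_{j+1}\cdots B_{N-1}}$, apply Step~1 with $(w,x,y)=(E_{\acute{\alpha}A|B_j\cdots B_{N-1}},E_{\acute{\alpha}A|B_{j+1}\cdots B_{N-1}},E_{\acute{\alpha}AB_j})$ for $j=N-2,N-3,\dots,m+1$, starting from $E_{\acute{\alpha}A|B_{N-1}}=E_{\acute{\alpha}AB_{N-1}}$, to obtain as in~(\ref{y})
\begin{equation*}
E_{\acute{\alpha}A|B_{m+1}\cdots B_{N-1}}^{\alpha}\ \ge\ \mathcal{K}_{m+1}E_{\acute{\alpha}AB_{m+1}}^{\alpha}+\cdots+\mathcal{K}_{N-2}E_{\acute{\alpha}AB_{N-2}}^{\alpha}+E_{\acute{\alpha}AB_{N-1}}^{\alpha}.
\end{equation*}
Substituting the second display into the first gives the asserted bound.

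Since the constants $\mu_i$ and $\mu_j$ are hypothesised outright, no existence argument is needed (they are admissible because the $\acute{\alpha}\ge2$ seed monogamy already allows $\mu=1$); and the convexity/monotonicity of $f_{\acute{\alpha}}$ is only invoked to justify that seed relation for the coarse-grained $2\otimes2\otimes2^{\,n-2}$ splits. I therefore expect no genuine analytic difficulty: the main (essentially only) obstacle is the index bookkeeping — keeping straight which of the two entries is the ``large'' one in the head versus the tail regime — together with the harmless degenerate cases where some $E_{\acute{\alpha}AB_i}$ or $E_{\acute{\alpha}A|B_{i+1}\cdots B_{N-1}}$ vanishes, in which the corresponding term simply drops and the inequality is trivial.
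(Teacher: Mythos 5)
Your proposal is correct and follows essentially the same route the paper intends: the paper omits an explicit proof for this theorem, deferring to the "similar consideration" used for concurrence, and your two-term inequality plus the head/tail iteration is exactly the analogue of Lemma \ref{ee} and displays (\ref{x})--(\ref{y}), with exponent $\alpha$ in place of $\alpha/2$ since the hypotheses are stated at the first power of $E_{\acute{\alpha}}$. You also correctly observe that, because the $\mu_i,\mu_j$ conditions are assumed outright, no existence argument (and no use of $f_{\acute{\alpha}}$ beyond motivating the hypotheses for $\acute{\alpha}\ge 2$) is required.
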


\begin{thm}
Let $\mu_r\geq1$ and $l_r\geq1$ ($1\le r\le N-2$) be real numbers. For any $N$-qubit mixed state $\rho_{AB_1\cdots B_{N-1}}\in\mathcal{H}_A\otimes\mathcal{H}_{B_1}\otimes\cdots\otimes\mathcal{H}_{B_{N-1}}$, if $E_{\acute{\alpha}AB_i}\geq l_iE_{\acute{\alpha}A|B_{i+1}\cdots B_{N-1}}$, $E_{\acute{\alpha}A|B_i\cdots B_{N-1}}\geq E_{\acute{\alpha}AB_i}+\mu_iE_{\acute{\alpha}A|B_{i+1}\cdots B_{N-1}}$ for all $i=1,2,\cdots,N-2$, then
\begin{eqnarray}
E_{\acute{\alpha}A|B_1\cdots B_{N-1}}^\alpha\geq E_{\acute{\alpha}AB_1}^\alpha+\mathcal{K}_1E_{\acute{\alpha}AB_2}^\alpha+\cdots
+\mathcal{K}_1\cdots\mathcal{K}_{N-2}E_{\acute{\alpha}AB_{N-1}}^\alpha
\end{eqnarray}
for all $\alpha\geq1$ and $\acute{\alpha}\geq2$, where $\mathcal{K}_r=(\mu_r+l_r)^{\alpha}-l_r^{\alpha}$, $1\le r\le N-2$.
\end{thm}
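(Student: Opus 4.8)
The plan is to imitate almost verbatim the argument for the concurrence (Theorem~\ref{ff} and its corollary), the only structural change being that one works with the first power of $E_{\acute{\alpha}}$ rather than with the square of $C$; this is why the exponent in $\mathcal{K}_r$ here is $\alpha$ and not $\alpha/2$, the underlying monogamy inequality $E_{\acute{\alpha}}(\rho_{A|B_1\cdots B_{N-1}})\ge\sum_{i=1}^{N-1}E_{\acute{\alpha}}(\rho_{AB_i})$ from \cite{17} (valid for $\acute{\alpha}\ge2$) being itself a first-power relation. It is also worth noting that the statement is exactly the $m=N-2$ specialization of the preceding general R\'enyi theorem: when $m=N-2$ the hypotheses indexed by $j=m+1,\dots,N-2$ are vacuous, so it suffices to run the ``first-block'' part of the iteration all the way down to $B_{N-1}$.

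First I would record the two-argument building block, the $E_{\acute{\alpha}}$-analogue of Lemma~\ref{ee}: if $\mu\ge1$, $l\ge1$, $\alpha\ge1$ and a (sub)system obeys $E_{\acute{\alpha}A|BC}\ge E_{\acute{\alpha}AB}+\mu\,E_{\acute{\alpha}AC}$ together with $E_{\acute{\alpha}AB}\ge l\,E_{\acute{\alpha}AC}$, then
\[
E_{\acute{\alpha}A|BC}^{\alpha}\ \ge\ E_{\acute{\alpha}AB}^{\alpha}+\bigl((\mu+l)^{\alpha}-l^{\alpha}\bigr)E_{\acute{\alpha}AC}^{\alpha}.
\]
If $E_{\acute{\alpha}AC}=0$ this reduces to $E_{\acute{\alpha}A|BC}^{\alpha}\ge E_{\acute{\alpha}AB}^{\alpha}$, which is immediate; otherwise set $t=E_{\acute{\alpha}AB}/E_{\acute{\alpha}AC}\ge l$ and use the identity
\[
(E_{\acute{\alpha}AB}+\mu\,E_{\acute{\alpha}AC})^{\alpha}=\mu^{\alpha}E_{\acute{\alpha}AC}^{\alpha}\Bigl[(1+\mu^{-1}t)^{\alpha}-(\mu^{-1}t)^{\alpha}\Bigr]+E_{\acute{\alpha}AB}^{\alpha}.
\]
Since $x\mapsto(1+x)^{\alpha}-x^{\alpha}$ is nondecreasing for $\alpha\ge1$ (inequality~(\ref{e}) of Lemma~\ref{dd}) and $\mu^{-1}t\ge\mu^{-1}l$, the bracket is at least $(1+\mu^{-1}l)^{\alpha}-(\mu^{-1}l)^{\alpha}$; multiplying back by $\mu^{\alpha}$ turns this into $(\mu+l)^{\alpha}-l^{\alpha}$, and combining with $E_{\acute{\alpha}A|BC}\ge E_{\acute{\alpha}AB}+\mu\,E_{\acute{\alpha}AC}\ge0$ yields the claim. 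In contrast to the EoF case of Theorem~\ref{gg}, no passage through $f_{\acute{\alpha}}(C)$ or through an optimal pure-state decomposition is needed, because both the building block and the hypotheses are stated directly in terms of $E_{\acute{\alpha}}$; the block therefore applies verbatim to every truncated system $\rho_{AB_i\cdots B_{N-1}}$.

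Then I would iterate. For each $i=1,\dots,N-2$ the hypotheses $E_{\acute{\alpha}AB_i}\ge l_iE_{\acute{\alpha}A|B_{i+1}\cdots B_{N-1}}$ and $E_{\acute{\alpha}A|B_i\cdots B_{N-1}}\ge E_{\acute{\alpha}AB_i}+\mu_iE_{\acute{\alpha}A|B_{i+1}\cdots B_{N-1}}$ are precisely what the building block needs, so
\[
E_{\acute{\alpha}A|B_i\cdots B_{N-1}}^{\alpha}\ \ge\ E_{\acute{\alpha}AB_i}^{\alpha}+\mathcal{K}_i\,E_{\acute{\alpha}A|B_{i+1}\cdots B_{N-1}}^{\alpha},\qquad \mathcal{K}_i=(\mu_i+l_i)^{\alpha}-l_i^{\alpha}.
\]
Starting from $i=1$ and substituting the bound for $E_{\acute{\alpha}A|B_{i+1}\cdots B_{N-1}}^{\alpha}$ at each step (exactly as in the chain~(\ref{x}) in the proof of Theorem~\ref{ff}), one peels off $B_1,B_2,\dots,B_{N-2}$ in turn, the coefficient in front of $E_{\acute{\alpha}AB_r}^{\alpha}$ accumulating as $\mathcal{K}_1\cdots\mathcal{K}_{r-1}$; after the last step the residual term is $\mathcal{K}_1\cdots\mathcal{K}_{N-2}E_{\acute{\alpha}A|B_{N-1}}^{\alpha}=\mathcal{K}_1\cdots\mathcal{K}_{N-2}E_{\acute{\alpha}AB_{N-1}}^{\alpha}$, since $\rho_{A|B_{N-1}}$ is already a bipartite cut. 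Collecting terms yields exactly the asserted inequality; $\acute{\alpha}\ge2$ enters only through the admissibility of the hypotheses (the R\'enyi monogamy relation of \cite{17} guaranteeing that a $\mu_i\ge1$ as in the statement exists), and $\alpha\ge1$ only through inequality~(\ref{e}).

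I do not expect a genuine obstacle here: the argument is a routine transcription of the concurrence case. The only points needing care are the degenerate case $E_{\acute{\alpha}AC}=0$ in the building block (handled above) and the bookkeeping of the telescoping coefficients, which is identical to the proof of Theorem~\ref{ff}. For the more general R\'enyi theorem one additionally needs the mirrored block, in which the hypotheses read $E_{\acute{\alpha}A|B_{j+1}\cdots B_{N-1}}\ge l_jE_{\acute{\alpha}AB_j}$ and $E_{\acute{\alpha}A|B_j\cdots B_{N-1}}\ge\mu_jE_{\acute{\alpha}AB_j}+E_{\acute{\alpha}A|B_{j+1}\cdots B_{N-1}}$ and the two arguments swap roles; but for the present statement ($m=N-2$) that block is unused.
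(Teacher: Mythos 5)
Your proposal is correct and follows exactly the route the paper intends: the paper states this theorem without explicit proof, deferring ("similar consideration to concurrence") to the argument of Lemma \ref{ee} and Theorem \ref{ff}, and your transcription of that argument—building block via Lemma \ref{dd} with exponent $\alpha\ge1$ applied directly to $E_{\acute{\alpha}}$ (no detour through $f_{\acute\alpha}(C)$), then the telescoping iteration down to $B_{N-1}$—is precisely that proof, including the correct observation that $\acute{\alpha}\ge2$ enters only through the plausibility of the hypotheses.
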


\begin{eg}
Let us consider the state in Example \ref{a} with $\lambda_0=\lambda_3=\lambda_4={1}/{\sqrt{5}}$, $\lambda_2=\sqrt{{2}/{5}}$ and $\lambda_1=0$. For $\acute{\alpha}=2$, we have $E_{2A|BC}=\log_2({25}/{17})\approx0.556393$, $E_{2AB}=\log_2({25}/{21})\approx0.251539$ and $E_{2AC}=\log_2({25}/{23})\approx0.120294$. Then
\begin{eqnarray}
E_{2AB}^\alpha+E_{2AC}^\alpha&=&(0.251539)^\alpha+(0.120294)^\alpha,\\
E_{2AB}^\alpha+(((1+k)^\alpha-1)/k^\alpha)E_{2AC}^\alpha&=&(0.251539)^\alpha+(((1+k)^\alpha-1)/k^\alpha)(0.120294)^\alpha,\\
E_{2AB}^\alpha+((\mu+l)^\alpha-l^\alpha)E_{2AC}^\alpha&=&(0.251539)^\alpha+((\mu+l)^\alpha-l^\alpha)(0.120294)^\alpha,
\end{eqnarray}
which show that our result is better than the one given in \cite{25,27} for $\alpha\geq1$, see Figure 5.
\end{eg}
\begin{figure}[h]
\begin{center}
\includegraphics[width=8cm,clip]{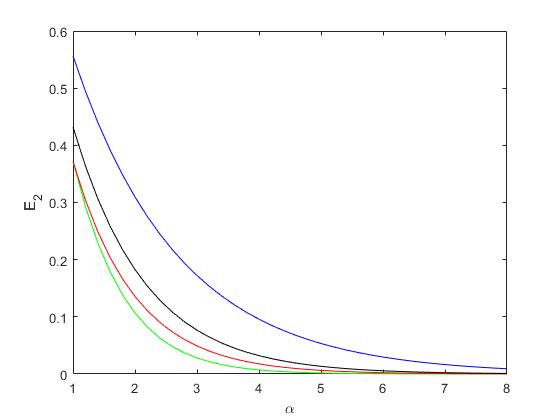}
\caption{From top to bottom, the first curve represents the R$\acute{e}$nyi-$\alpha$ entanglement $E_{\acute{\alpha}}(|\phi\rangle_{A|BC})$, the third and fourth curves represent the lower bounds from \cite{27} and \cite{25}, respectively, the second curve represents the lower bound from our result.}	
\end{center}
\end{figure}

Correspondingly, for $E_{a\acute{\alpha}}$ we have

\begin{thm}
Let $0 \textless\mu_r\leq1$ and $l_r\geq1$ ($1\le r\le N-2$) be real numbers. For any $N$-qubit mixed state $\rho_{AB_1\cdots B_{N-1}}\in\mathcal{H}_A\otimes\mathcal{H}_{B_1}\otimes\cdots\otimes\mathcal{H}_{B_{N-1}}$, if $E_{a\acute{\alpha}AB_i}\geq l_iE_{a\acute{\alpha}A|B_{i+1}\cdots B_{N-1}}$, $E_{a\acute{\alpha}A|B_i\cdots B_{N-1}}\leq E_{a\acute{\alpha}AB_i}+\mu_iE_{a\acute{\alpha}A|B_{i+1}\cdots B_{N-1}}$ for $i=1,2,\cdots,m$, and $E_{a\acute{\alpha}A|B_{j+1}\cdots B_{N-1}}\geq l_jE_{a\acute{\alpha}AB_j}$, $E_{a\acute{\alpha}A|B_j\cdots B_{N-1}}\leq\mu_jE_{a\acute{\alpha}AB_j}+E_{a\acute{\alpha}A|B_{j+1}\cdots B_{N-1}}$ for $j=m+1,
\cdots,N-2$, $1\leq m\leq N-3,N\geq4$, we have
\begin{eqnarray}
E_{a\acute{\alpha}A|B_1\cdots B_{N-1}}^\alpha&\leq&E_{a\acute{\alpha}AB_1}^\alpha+\mathcal{K}_1E_{a\acute{\alpha}AB_2}^\alpha+\cdots+\mathcal{K}_1\cdots\mathcal{K}_{m-1}E_{a\acute{\alpha}AB_m}^\alpha\nonumber\\
& &+\mathcal{K}_1\cdots\mathcal{K}_{m}(\mathcal{K}_{m+1}E_{a\acute{\alpha}AB_{m+1}}^\alpha+\cdots+\mathcal{K}_{N-2}E_{a\acute{\alpha}AB_{N-2}}^\alpha)\nonumber\\
& &+\mathcal{K}_1\cdots\mathcal{K}_{m}E_{a\acute{\alpha}AB_{N-1}}^\alpha
\end{eqnarray}
for all $0\leq\alpha\leq1$ and $\frac{\sqrt{7}-1}{2}\leq\acute{\alpha}\leq\frac{\sqrt{13}-1}{2}$, where $\mathcal{K}_r=(\mu_r+l_r)^{\alpha}-l_r^{\alpha}$, $1\le r\le N-2$.
\end{thm}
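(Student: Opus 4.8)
The plan is to reproduce, in the polygamy (upper-bound) setting and in the exponent range $0\le\alpha\le1$, the structure of the proof of Theorem~\ref{ff}; the roles of inequality~(\ref{e}) and of the monogamy one-step Lemma~\ref{ee} are now played by inequality~(\ref{f}) of Lemma~\ref{dd} and by a ``dual'' one-step estimate. \textbf{Step 1: a dual one-step estimate.} First I would prove the following elementary fact. Let $0<\mu\le1$, $l\ge1$ and $0\le\alpha\le1$, and let $a,b,c$ be non-negative reals with $a\ge lb$ and $c\le a+\mu b$. Then
\begin{equation*}
c^\alpha\le a^\alpha+\big[(\mu+l)^\alpha-l^\alpha\big]b^\alpha .
\end{equation*}
Since $x\mapsto x^\alpha$ is nondecreasing on $[0,\infty)$ we have $c^\alpha\le(a+\mu b)^\alpha$. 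If $b=0$ the right-hand side reduces to $a^\alpha\ge c^\alpha$ and we are done; if $b>0$, write $(a+\mu b)^\alpha=(\mu b)^\alpha\big[(1+\tfrac{a}{\mu b})^\alpha-(\tfrac{a}{\mu b})^\alpha\big]+a^\alpha$, note that $\tfrac{a}{\mu b}\ge\tfrac{l}{\mu}\ge0$ by $a\ge lb$, and apply inequality~(\ref{f}) with $x=\tfrac{a}{\mu b}$, $y=\tfrac{l}{\mu}$ to bound the bracket by $(1+\tfrac{l}{\mu})^\alpha-(\tfrac{l}{\mu})^\alpha$. Multiplying by $(\mu b)^\alpha$ collapses this to $\big[(\mu+l)^\alpha-l^\alpha\big]b^\alpha$.

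\textbf{Step 2: instantiate and iterate.} With $\mathcal{K}_r=(\mu_r+l_r)^\alpha-l_r^\alpha$, applying Step~1 with $a=E_{a\acute{\alpha}AB_i}$, $b=E_{a\acute{\alpha}A|B_{i+1}\cdots B_{N-1}}$, $c=E_{a\acute{\alpha}A|B_i\cdots B_{N-1}}$ for $i=1,\dots,m$ gives $E_{a\acute{\alpha}A|B_i\cdots B_{N-1}}^\alpha\le E_{a\acute{\alpha}AB_i}^\alpha+\mathcal{K}_iE_{a\acute{\alpha}A|B_{i+1}\cdots B_{N-1}}^\alpha$, and applying it with $a=E_{a\acute{\alpha}A|B_{j+1}\cdots B_{N-1}}$, $b=E_{a\acute{\alpha}AB_j}$, $c=E_{a\acute{\alpha}A|B_j\cdots B_{N-1}}$ for $j=m+1,\dots,N-2$ gives $E_{a\acute{\alpha}A|B_j\cdots B_{N-1}}^\alpha\le E_{a\acute{\alpha}A|B_{j+1}\cdots B_{N-1}}^\alpha+\mathcal{K}_jE_{a\acute{\alpha}AB_j}^\alpha$; in each case the two hypotheses of the theorem are exactly the two hypotheses of Step~1. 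I would then chain these bounds just as in the proof of Theorem~\ref{ff}: iterating the first family from $i=1$ up to $i=m$ peels off $E_{a\acute{\alpha}AB_1}^\alpha+\mathcal{K}_1E_{a\acute{\alpha}AB_2}^\alpha+\cdots+\mathcal{K}_1\cdots\mathcal{K}_{m-1}E_{a\acute{\alpha}AB_m}^\alpha$ and leaves $\mathcal{K}_1\cdots\mathcal{K}_mE_{a\acute{\alpha}A|B_{m+1}\cdots B_{N-1}}^\alpha$; iterating the second family from $j=m+1$ up to $j=N-2$, and using $E_{a\acute{\alpha}A|B_{N-1}}=E_{a\acute{\alpha}AB_{N-1}}$ at the terminal step, bounds $E_{a\acute{\alpha}A|B_{m+1}\cdots B_{N-1}}^\alpha$ by $\mathcal{K}_{m+1}E_{a\acute{\alpha}AB_{m+1}}^\alpha+\cdots+\mathcal{K}_{N-2}E_{a\acute{\alpha}AB_{N-2}}^\alpha+E_{a\acute{\alpha}AB_{N-1}}^\alpha$. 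Substituting the second chain into the first yields the asserted inequality.

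\textbf{Main obstacle.} The argument is structurally routine once Step~1 is in place; the delicate points are bookkeeping rather than substance: every inequality must point the correct way (this is an upper bound, so monotonicity of $x\mapsto x^\alpha$ and inequality~(\ref{f}) are used, never~(\ref{e})), the degenerate case $b=0$ must be isolated so the normalisation in Step~1 is legitimate, and the split index $m$ must be threaded consistently through both iteration loops. The hypothesis $\frac{\sqrt7-1}{2}\le\acute{\alpha}\le\frac{\sqrt{13}-1}{2}$ enters only through the fact that in this range REoA satisfies the basic polygamy relation $E_{a\acute{\alpha}}(\rho_{A|BC})\le E_{a\acute{\alpha}}(\rho_{AB})+E_{a\acute{\alpha}}(\rho_{AC})$, which is what makes constants $\mu_r\le1$ available; it plays no further role in the derivation.
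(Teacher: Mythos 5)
Your proposal is correct and is precisely the argument the paper intends: the paper states this theorem without an explicit proof ("correspondingly, for $E_{a\acute{\alpha}}$..."), relying on the same scheme you describe, namely a dual one-step bound obtained from inequality~(\ref{f}) of Lemma~\ref{dd} (with $y=l/\mu\le x=a/(\mu b)$, valid since $0<\mu\le1\le l$) followed by the same two-stage iteration as in Theorem~\ref{ff}, with the $\acute{\alpha}$-range serving only to make the hypotheses with $\mu_r\le1$ natural via the basic REoA polygamy relation. No gaps; your handling of the $b=0$ case and of the terminal identification $E_{a\acute{\alpha}A|B_{N-1}}=E_{a\acute{\alpha}AB_{N-1}}$ matches what the paper's analogous proofs implicitly use.
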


\begin{thm}
Let $0\textless\mu_r\leq1$ and $l_r\geq1$ ($1\le r\le N-2$) be real numbers. For any $N$-qubit mixed state $\rho_{AB_1\cdots B_{N-1}}\in\mathcal{H}_A\otimes\mathcal{H}_{B_1}\otimes\cdots\otimes\mathcal{H}_{B_{N-1}}$, if $E_{a\acute{\alpha}AB_i}\geq l_iE_{a\acute{\alpha}A|B_{i+1}\cdots B_{N-1}}$, $E_{a\acute{\alpha}A|B_i\cdots B_{N-1}}\leq E_{a\acute{\alpha}AB_i}+\mu_iE_{a\acute{\alpha}A|B_{i+1}\cdots B_{N-1}}$ for all $i=1,2,\cdots,N-2$, then
\begin{eqnarray}
E_{a\acute{\alpha}A|B_1\cdots B_{N-1}}^\alpha\leq E_{a\acute{\alpha}AB_1}^\alpha+\mathcal{K}_1E_{a\acute{\alpha}AB_2}^\alpha+\cdots
+\mathcal{K}_1\cdots\mathcal{K}_{N-2}E_{a\acute{\alpha}AB_{N-1}}^\alpha,
\end{eqnarray}
for all $0\leq\alpha\leq1$ and $\frac{\sqrt{7}-1}{2}\leq\acute{\alpha}\leq\frac{\sqrt{13}-1}{2}$, where $\mathcal{K}_r=(\mu_r+l_r)^{\alpha}-l_r^{\alpha}$, $1\le r\le N-2$.
\end{thm}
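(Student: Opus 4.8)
The plan is to isolate a single one-step polygamy estimate---the $\alpha\in[0,1]$, reversed-inequality analogue of Lemma~\ref{ee}---and then iterate it down the tower $A|B_1,\ A|B_1B_2,\ \dots,\ A|B_1\cdots B_{N-1}$, exactly as Theorems~\ref{ff} and~\ref{gg} were proved but with every ``$\ge$'' replaced by ``$\le$''. In fact the present statement is literally the case $m=N-2$ of the preceding (more general) theorem: then the ``$j$-type'' conditions are vacuous, the bracketed middle block $\mathcal{K}_1\cdots\mathcal{K}_m(\mathcal{K}_{m+1}E_{a\acute{\alpha}AB_{m+1}}^\alpha+\cdots+\mathcal{K}_{N-2}E_{a\acute{\alpha}AB_{N-2}}^\alpha)$ is an empty sum, and the trailing term $\mathcal{K}_1\cdots\mathcal{K}_mE_{a\acute{\alpha}AB_{N-1}}^\alpha$ collapses to $\mathcal{K}_1\cdots\mathcal{K}_{N-2}E_{a\acute{\alpha}AB_{N-1}}^\alpha$, which is exactly the claimed bound. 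So it suffices to describe the self-contained argument through the one-step inequality.

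\emph{One-step inequality.} For $0\le\alpha\le1$, $0<\mu\le1$ and $l\ge1$, if $E_{a\acute{\alpha}A|BC}\le E_{a\acute{\alpha}AB}+\mu E_{a\acute{\alpha}AC}$ and $E_{a\acute{\alpha}AB}\ge l\,E_{a\acute{\alpha}AC}$, then $E_{a\acute{\alpha}A|BC}^\alpha\le E_{a\acute{\alpha}AB}^\alpha+\big((\mu+l)^\alpha-l^\alpha\big)E_{a\acute{\alpha}AC}^\alpha$. To establish it I would first apply the monotone map $t\mapsto t^\alpha$ ($\alpha\ge0$) to the assumed polygamy-type bound, obtaining $E_{a\acute{\alpha}A|BC}^\alpha\le(E_{a\acute{\alpha}AB}+\mu E_{a\acute{\alpha}AC})^\alpha$; assuming $E_{a\acute{\alpha}AC}>0$ (if $E_{a\acute{\alpha}AC}=0$ the claimed bound is immediate), I would factor out $\mu^\alpha E_{a\acute{\alpha}AC}^\alpha$ and rewrite the right side as $\mu^\alpha E_{a\acute{\alpha}AC}^\alpha\big[(1+\mu^{-1}E_{a\acute{\alpha}AB}/E_{a\acute{\alpha}AC})^\alpha-(\mu^{-1}E_{a\acute{\alpha}AB}/E_{a\acute{\alpha}AC})^\alpha\big]+E_{a\acute{\alpha}AB}^\alpha$. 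By inequality~(\ref{f}) of Lemma~\ref{dd} with $s=\alpha$, the function $x\mapsto(1+x)^\alpha-x^\alpha$ is non-increasing on $[0,\infty)$, so replacing $x=\mu^{-1}E_{a\acute{\alpha}AB}/E_{a\acute{\alpha}AC}$ by the smaller number $\mu^{-1}l$ can only enlarge the bracket; this yields $E_{a\acute{\alpha}A|BC}^\alpha\le\mu^\alpha E_{a\acute{\alpha}AC}^\alpha\big[(1+\mu^{-1}l)^\alpha-(\mu^{-1}l)^\alpha\big]+E_{a\acute{\alpha}AB}^\alpha=\big((\mu+l)^\alpha-l^\alpha\big)E_{a\acute{\alpha}AC}^\alpha+E_{a\acute{\alpha}AB}^\alpha$, i.e.\ $\mathcal{K}\,E_{a\acute{\alpha}AC}^\alpha+E_{a\acute{\alpha}AB}^\alpha$ with $\mathcal{K}=(\mu+l)^\alpha-l^\alpha$. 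Note the range $\frac{\sqrt7-1}{2}\le\acute{\alpha}\le\frac{\sqrt{13}-1}{2}$ plays no role in this step: the seed bound $E_{a\acute{\alpha}A|BC}\le E_{a\acute{\alpha}AB}+\mu E_{a\acute{\alpha}AC}$ is a hypothesis here, and in the regime where it genuinely holds it comes from the REoA polygamy relation of~\cite{23}.

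Then I would apply this one-step inequality with $(B,C)=(B_i,\,B_{i+1}\cdots B_{N-1})$ and $(\mu,l)=(\mu_i,l_i)$; using the hypotheses $E_{a\acute{\alpha}AB_i}\ge l_iE_{a\acute{\alpha}A|B_{i+1}\cdots B_{N-1}}$ and $E_{a\acute{\alpha}A|B_i\cdots B_{N-1}}\le E_{a\acute{\alpha}AB_i}+\mu_iE_{a\acute{\alpha}A|B_{i+1}\cdots B_{N-1}}$ this gives $E_{a\acute{\alpha}A|B_i\cdots B_{N-1}}^\alpha\le E_{a\acute{\alpha}AB_i}^\alpha+\mathcal{K}_iE_{a\acute{\alpha}A|B_{i+1}\cdots B_{N-1}}^\alpha$ for each $i=1,\dots,N-2$. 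Feeding these into one another from $i=1$ upward produces the telescoping chain $E_{a\acute{\alpha}A|B_1\cdots B_{N-1}}^\alpha\le E_{a\acute{\alpha}AB_1}^\alpha+\mathcal{K}_1E_{a\acute{\alpha}A|B_2\cdots B_{N-1}}^\alpha\le E_{a\acute{\alpha}AB_1}^\alpha+\mathcal{K}_1E_{a\acute{\alpha}AB_2}^\alpha+\mathcal{K}_1\mathcal{K}_2E_{a\acute{\alpha}A|B_3\cdots B_{N-1}}^\alpha\le\cdots$, and at the final stage $E_{a\acute{\alpha}A|B_{N-1}}^\alpha=E_{a\acute{\alpha}AB_{N-1}}^\alpha$, which gives exactly the asserted sum. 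The only real obstacle is careful bookkeeping: pinning down the direction of monotonicity of $x\mapsto(1+x)^\alpha-x^\alpha$ on the whole half-line for $\alpha\in[0,1]$ (supplied by Lemma~\ref{dd}), disposing of the degenerate cases $\alpha=0$ and vanishing denominators, and---if one runs the general theorem rather than specializing to $m=N-2$---switching to the ``reversed'' version of the one-step inequality (with the roles of the $\mu$-term and the $l$-bounded term interchanged, so that $E_{a\acute{\alpha}A|B_{j+1}\cdots B_{N-1}}\ge l_jE_{a\acute{\alpha}AB_j}$ is used) for the indices $j=m+1,\dots,N-2$ before concatenating the two blocks.
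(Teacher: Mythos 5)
Your proposal is correct and follows essentially the same route the paper intends: the one-step bound obtained by factoring $(E_{a\acute{\alpha}AB}+\mu E_{a\acute{\alpha}AC})^\alpha$ and applying inequality (\ref{f}) of Lemma \ref{dd} (the $0\leq s\leq 1$ monotonicity of $x\mapsto(1+x)^s-x^s$), then iterating along the partitions $A|B_i\cdots B_{N-1}$, exactly mirroring the proofs of Lemma \ref{ee} and Theorems \ref{ff} and \ref{gg} with the inequalities reversed. Your handling of the degenerate cases ($E_{a\acute{\alpha}AC}=0$, $\alpha=0$) and the observation that the range of $\acute{\alpha}$ only matters for the provenance of the assumed polygamy-type hypothesis are consistent with the paper.
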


\section{\bf Conclusion}
 We have provided tighter monogamy inequalities with respect to the concurrence, entanglement of formation, convex-roof extended negativity, Tsallis-q entanglement and R$\acute{e}$nyi-$\alpha$ entanglement, we have also provided tighter polygamy inequalities with respect to the entanglement of formation, Tsallis-q entanglement and R$\acute{e}$nyi-$\alpha$ entanglement. Monogamy and polygamy inequalities play significant roles in characterizing the entanglement distributions and shareability in multipartite quantum systems. Tighter monogamy relations imply finer characterizations of the entanglement distribution. Our approach may also be used to study the monogamy properties related to other quantum correlations, and provides a useful way to understand the property of multipartite entanglement.

\vspace{5pt}

\noindent {\bf Acknowledgments}
This work is supported by the NSF of China under Grant No. 12075159, Beijing Natural Science Foundation (Z190005), Key Project of Beijing Municipal Commission of Education (KZ201810028042), the Academician Innovation Platform of Hainan Province, Academy for Multidisciplinary Studies, Capital Normal University, and Shenzhen Institute for Quantum Science and Engineering, Southern University of Science and Technology (Grant No. SIQSE202001).


\end{document}